\documentclass[conference]{IEEEtran}
\usepackage{enumitem}
\usepackage{hyperref}
\hypersetup{
    colorlinks=true,
    linkcolor=black,
    filecolor=black, 
    citecolor=black,
    urlcolor=cyan,
}

\hypersetup{
    colorlinks=true,
    linkcolor=blue,
    filecolor=magenta,      
    urlcolor=cyan,
    pdfpagemode=FullScreen
    }
\usepackage{tikz}
\usepackage{amsmath}
\newtheorem{definition}{Definition}
\newtheorem{theorem}{Theorem}
\newtheorem{assumption}{Assumption}
\usepackage{filecontents}
\usepackage{verbatim} 
\usepackage{graphicx}
\usepackage{enumitem}
\usepackage{colortbl} 
\usepackage{xspace}
\usepackage{amsmath}
\usepackage{url}
\usepackage[framemethod=TikZ]{mdframed}
\usepackage[compatibility=false]{caption}  
\usepackage{subcaption}
\usepackage{float}
\usepackage{balance}
\usepackage{multirow} 
\usepackage{booktabs} 
\usepackage{xcolor}  
\usepackage{amsmath}
\usepackage{amssymb}
\usepackage{threeparttable}
\usepackage{soul}
\usepackage{circledsteps}
\newcommand{\name}{\texttt{PRIME}\xspace}

\newcommand{\mypara}[1]{\noindent\textbf{#1}}

\begin{document}
\date{}

% make title bold and 14 pt font (Latex default is non-bold, 16 pt)
\title{From Multiplicity to Vulnerability: Privacy Amplification Risk from One-Dataset-Multiple-Model Exposure}
% \title{Theoretical and Empirical Privacy Amplification Risk from One-Dataset-Multiple-Model Exposure}

%for single author (just remove % characters)
\author{
\IEEEauthorblockN{
Qirui Huang\textsuperscript{1},
Na Li\textsuperscript{2},
Hongsheng Hu\textsuperscript{3},
Zhi Zhang\textsuperscript{1},
Anmin Fu\textsuperscript{2},
Yansong Gao\textsuperscript{1}
}

\IEEEauthorblockA{\textsuperscript{1}The University of Western Australia, Australia}

\IEEEauthorblockA{\textsuperscript{2}Nanjing University of Science and Technology, China}

\IEEEauthorblockA{\textsuperscript{3}The University of Newcastle, Australia}
}

\IEEEoverridecommandlockouts
\makeatletter\def\@IEEEpubidpullup{6.5\baselineskip}\makeatother

% make the title area
\maketitle

%-------------------------------------------------------------------------------
\begin{abstract}
To efficiently exploit a valuable data source (e.g., facial or medical images), it is frequently harnessed to fulfill multiple learning objectives (e.g., facial recognition, age estimation, and race classification). Each trained model is then deployed as an independent API service for corresponding inference. However, the privacy risk introduced by this one-dataset-multiple-model (ODMM) paradigm is completely overlooked by the community.

For the first time, this work reveals that the ODMM setting substantially amplifies privacy leakage. 
We establish a theoretical framework that proves that privacy leakage accumulates as more ODMM models are exposed, a phenomenon we term ODMM privacy composition. Guided by this theoretical foundation, we propose \name (\textbf{P}rivacy Amplification \textbf{RI}sk from One-Dataset-Multiple-\textbf{M}odel \textbf{E}xposure) to systematically assess this risk and quantify the resulting leakage using membership inference attacks (MIAs). Under black-box access to ODMM models, we design an aggregation mechanism that collectively captures carefully identified privacy signals leaked by individual ODMM models, and construct an attack meta-classifier over the aggregated meta-information to infer the membership status of a given sample jointly.
Our results provide strong evidence that dataset reuse across ODMM models strikingly jeopardizes privacy,
which is consistently evident across five privacy-sensitive image and textual benchmark datasets and diverse model architectures (from ResNet and ViT to Qwen3-1.7B), spanning three domains: facial analysis, medical imaging, and textual attribution analysis.
While mitigations such as differential privacy can reduce the effectiveness of \name with trade-offs, our attack still consistently outperforms single-task MIAs. 
% Overall, our findings underscore that privacy considerations must be revisited when reusing data to achieve cost-efficient deployment of ODMM models.

\end{abstract}
\IEEEpeerreviewmaketitle

%-------------------------------------------------------------------------------
% ======================================================
\section{Introduction}
\label{sec:intro}
% ======================================================
The rapid advancement of deep learning (DL) has profoundly reshaped diverse domains such as autonomous driving~\cite{hu2023planning}, facial recognition~\cite{deng2019arcface}, and medical diagnosis~\cite{litjens2017survey}, while the recent emergence of Artificial Intelligence Generated Content (AIGC), ranging from text generation (e.g., ChatGPT~\cite{chatgpt}) to image synthesis (e.g., Midjourney~\cite{midjourney}), has further propelled this transformative wave. 
%However, this remarkable progress is fundamentally underpinned by the availability of massive high-dimensional datasets tailored to different tasks, which remain practically challenging to acquire due to prohibitive collection costs and inherent data scarcity.
For this success, high-quality data has been recognized as a critical factor for improving model performance in addition to scaled model size~\cite{wang2024greats,li2024quantity}. However, this remarkable progress is fundamentally underpinned by the availability of massive task-specific datasets, which remain practically challenging to acquire due to prohibitive collection costs and inherent data scarcity. 

Fortunately, a given dataset typically harbors rich information suitable for diverse learning objectives~\cite{vandenhende2021multi, zamir2018taskonomy,ye2022inverted}. 
Consequently, to overcome data scarcity and maximize valuable resource utility, it has become standard practice across both academia and industry to train multiple individual models over the same dataset (e.g., CelebA), with each addressing a specific task (e.g., age estimation, race, gender, and facial expression recognition). 
For another instance, Facebook AI Research's Detectron2~\cite{detectron2} releases multiple independent task-specific vision models in its model zoo, including object detection, instance segmentation, human keypoint estimation, and panoptic segmentation, all trained on the COCO dataset~\cite{lin2014microsoft}.
%For instance, the benchmark CelebA dataset~\cite{liu2015deep}, originally curated for facial attribute recognition, has been extensively repurposed---across open-source model zoo, i.e., Hugging Face---for tasks including gender classification~\cite{celebagender}, smile detection~\cite{celebasmile}, and image orientation detection~\cite{celebaorientation}.

% \vspace{0.1cm}
\noindent\textbf{Overlooked Privacy Problem:}
Although repurposing the same data source for multiple tasks optimizes resource efficiency, the \textit{privacy risks} posed by the collective exposure of \textit{one-dataset-multiple-model} (ODMM) are essentially overlooked and poorly understood, despite the privacy vulnerabilities of \textit{individual} DL model has been widely studied~\cite{li2022auditing,baluta2022membership,ye2022enhanced,li2021membership,shokri2017membership,carlini2022membership,li2024seqmia,shang2025defend}.

MIAs serve as a critical tool for auditing such privacy leakage~\cite{WangZCBKY25, yeom2018privacy, chen2020gan, shokri2017membership,li25enhanced,chen2025method}. By exploiting the DL model’s memorization of training data, MIAs distinguish subtle differences in model outputs between members and non-members, thus enabling attackers to infer whether a specific sample was included in the training set. For example, determining an individual’s membership in a facial recognition training dataset for specialized psychiatric cohorts could inadvertently disclose sensitive mental health information, thereby compromising personal privacy.
Building on this, from another perspective, MIAs can be utilized to audit the risks of user privacy leakage and assess deployed model compliance with privacy regulations such as the General Data Protection Regulation (GDPR)~\cite{gdpr}.

Notably, each task-specific model learns distinct representations from the same underlying data distribution and is independently optimized toward its unique objective, which may lead to different memorization of the shared training samples, thereby raising potentially amplified privacy leakage. However, when multiple models upon one dataset, later referred to as ODMM models, are released and queried for inference, the privacy risks stemming from their divergent memorization behaviors remain \textit{overlooked}. Hence, we pose the following research questions to highlight the urgent need for a comprehensive investigation into the completely overlooked privacy risks from the exposure of ODMM models.

\begin{mdframed}[backgroundcolor=black!10,rightline=false,leftline=false,topline=false,bottomline=false,roundcorner=2mm]
Does the collective exposure of ODMM models derived exacerbate privacy leakage of their shared underlying dataset?
If so, to what degree does it amplify privacy risks?
\end{mdframed}

\noindent\textbf{Our Work:}
This work, for the first time, unveils and proves that access to multi-task models trained on the same underlying dataset exacerbates privacy leakage. 
Firstly, we theoretically prove that privacy leakage accumulates as more ODMM models are exposed. Generally, our proof is inspired by the composition theorem in differential privacy (DP), which formally establishes that privacy loss accumulates across multiple queries, degrading the protection afforded to any individual~\cite{dwork2006calibrating,dwork2014algorithmic,ganta2008composition,dinur2003revealing}. We apply this concept to this neglected new ODMM context and lay out a theoretical foundation as \textit{ODMM privacy composition}.

% This gap is particularly concerning given that the composition theorem in differential privacy (DP) formally establishes that privacy loss accumulates across multiple disclosures, degrading the protection afforded to any individual~\cite{dwork2006calibrating,dwork2014algorithmic,ganta2008composition,dinur2003revealing}. Nonetheless, the privacy amplification risk in the ODMM setting, where multiple non-DP models are independently trained on the same dataset and deployed as black-box APIs, has not been systematically investigated.
Guided by the theoretical foundation, through the lens of MIA as a privacy auditing tool, we propose and design \name as a framework for systematically and empirically assessing the degree of additional leakage across various factors and tasks.
%The primary reason is that each task-specific models leak privacy in slightly different ways due to divergent memorization patterns for the same sample.
% Fundamentally, this vulnerability stems from the divergent memorization patterns inherent to distinct tasks, where each ODMM model can expose unique information about the same sample. 
Aggregating accumulated information exposed by ODMM models amplifies the overall privacy risk, even in the focused \textit{black-box} MIA. Below, we highlight the key findings of \name and briefly describe its core design.
%\garrison{The rest also requires update. We do not need to highlight empirical insight anymore or use some terms like divergent memorization patterns, complementary memorization. We can consistently name it as `ODMM privacy composition'.}

\noindent$\bullet$\textit{Theoretical Foundation.} We formalize ODMM privacy composition as a unified theoretical framework, demonstrating that collectively exposing ODMM models fundamentally amplifies privacy risk beyond the leakage from each individual model. First, we prove that the optimal membership inference advantage is monotonically non-decreasing as more models are exposed (Theorem~\ref{thm:monotone}), and that leakage accumulates additively across ODMM models with approximately independent per-model leakage signals (Theorem~\ref{thm:llr-decomp}). 
% Empirically, MIA performance improves monotonically, with AUC increasing from 65.6\% to 78.5\% as the number of exposed models grows from 3 to 7. 
However, tasks among ODMM models often have dependencies. Second, we theoretically characterize how such inter-task correlation affects the extent of privacy amplification, revealing that more heterogeneous or divergent (less correlated) tasks lead to more leakage than homogeneous ones (Theorem~\ref{thm:keff}). These theorems are supported by corresponding experimental evidence.
\noindent$\bullet$\textit{\name Framework.}
The above theorem lays out the foundations of the \name framework for systematic and empirical evaluation of ODMM privacy composition. 
Overall, a \textit{black-box adversary} leverages \textit{a given MIA feature extraction method} by querying each ODMM model to obtain corresponding raw outputs (e.g., posteriors or predicted values), which are then constructively transformed into (unique) task-specific feature vectors. Then, these vectors are strategically concatenated through a \textit{joint information fusion} step to form MIA meta-data, which is utilized to train a binary meta-classifier for membership inference.

We conduct comprehensive experiments on five image and textual benchmark datasets and five diverse model architectures
(from ResNet to ViT and Qwen3-1.7B) across three domains, including facial analysis, medical imaging, and large language models (LLMs) instruction fine-tuning to comprehensively evaluate the effectiveness of \name. Results affirm that \name consistently and \textit{significantly} outperforms all/maximum single-task/model in MIA performance, confirming that ODMM models trained on the same data source indeed substantially jeopardize privacy.
For instance, \name attains 86.9\% attack accuracy on UTKFace, surpassing the best single-task baseline by 18.4\%, with consistent improvements on HAM10000 (+7.8\%) and Blog Authorship Corpus (+9.8\%).
%For example, when evaluating on UTKFace with three tasks, \name achieves an AUC of 92.5\%, outperforming the average single-task attacks by over 25\%, while on HAM10000 (medical imaging) and Qwen3-1.7B fine-tuning (LLM domain), PRIME achieves 27.3\% and 19.7\% relative AUC improvements, respectively.
We also conduct ablation studies to analyze the effect of various factors, i.e., the number of exposed models 
%and the training data overlap rate, 
on the attack performance.
Finally, we provide further comparison and discussion on the privacy leakage of multi-task shared-backbone models compared to independently trained ODMM models, as well as potential defense mechanisms. Our main contributions are:

%\noindent$\bullet$ We discover and analyze the phenomenon that ODMM models memorize the same data differently, raising great privacy concerns. To the best of our knowledge, this has been completely overlooked by the community.

\noindent$\bullet$ We establish the theoretical foundation that privacy leakage accumulates across ODMM models, and the degree of amplification can be characterized by inter-task correlations.

\noindent$\bullet$  We propose \name, the first framework for systematically assessing empirical evidence on the ODMM privacy composition through the lens of MIA, showing that such ODMM co-exposure indeed significantly jeopardizes privacy.

\noindent$\bullet$ We conduct extensive experiments across classic image domains (including privacy-critical clinical AI) and the LLM field to demonstrate the effectiveness and validation of \name, which is also independent of MIA methods such as hard-label-only~\cite {li2021membership} or recent RMIA attacks~\cite {ZarifzadehLS24}. 
% The evaluated ODMM model task is diverse, from binary classification and multiple-class classification to regression and generative tasks.
% ======================================================
\section{Background and Related Work}
\label{sec:related}
% ======================================================

% \mypara{Membership Inference Attacks.} 
MIAs have been widely used for evaluating the privacy risks of DL models by inferring whether a particular sample belongs to the target model's training dataset~\cite{li2021membership,carlini2022membership,Wang2025rigging,ye2022enhanced,shang2025defend, HuiYYBGC21,zhang25soft}.
Formally, given a target sample $x$ and the victim model $\mathcal{M}$, the inference algorithm $\mathcal{A}$ can be expressed as:
\begin{equation}
\mathcal{A}: x, \mathcal{M} \rightarrow \{ 0, 1 \}.
\end{equation}
If a target sample $x$ was used to train $\mathcal{M}$, $\mathcal{A}$ outputs 1 (i.e., member) and 0 otherwise (i.e., non-member).

In practice, most MIAs operate under a black-box setting, where the adversary is restricted to observing only the posterior probabilities (i.e., confidence scores) of the victim model, and can be broadly categorized into training-based and metric-based methods, depending on whether a dedicated attack meta-classifier is employed.
Training-based MIAs first train shadow models to mimic the victim model’s behavior, then use the outputs of these shadow models to construct labeled meta-data, which is subsequently used to train a binary attack meta-classifier for membership inference.
For example, Shokri \textit{et al.}~\cite{shokri2017membership} use posteriors, i.e.,  prediction confidences, as the meta-data. Yuan \textit{et al.}~\cite{yuan2022membership} integrate posteriors, sensitivity, and labels to train a transformer-based attack meta-classifier. 
In contrast, several studies~\cite{song2021systematic,yeom2018privacy,Salem0HBF019,ZarifzadehLS24,carlini2022membership,du2026cascading} introduced metric-based MIAs that leverage statistical measures calculated from the victim model, e.g., posterior~\cite{yeom2018privacy}, scaled confidence score~\cite{du2025imitative}, modified entropy~\cite{song2021systematic}, and likelihood ratio~\cite{carlini2022membership}, to distinguish membership status without the need to train the attack meta-classifier.
More recently, Du \textit{et al.}~\cite{du2026cascading} utilize membership dependencies between instances to enhance attack performance.

Moreover, some studies utilize MIA as a tool to investigate the privacy effect of specific systems or technologies, e.g., query-based systems~\cite{stevanoski2024querycheetah}, RAG datastore~\cite{NasehPSCOH25, GaoMD0G25}, machine unlearning~\cite{chen2021machine}, model compression~\cite{li2025compleak}, speaker recognition systems~\cite{chen2024slmia}, explainable machine learning methods~\cite{liu2024please}, and visual encoders~\cite{zhu2024unified,liu2021encodermi}. 
Recently, MIAs have been extended to the foundation models, e.g., LLMs~\cite{meeus2024did,wen2024membership,he2025llms,tong2026membership,chen2026window}, diffusion models~\cite{Peng2025diffence,pang2025black,wang2026inference}, vision-language models~\cite{hu2025vlms}, and text-to-video models~\cite{wang2026vidleaks}. 
However, several state-of-the-art studies~\cite{duan2024do,hayesexploring} debate that MIAs on large-scale generative foundation models typically perform at near-random levels. This is largely attributed to the substantially reduced overfitting resulting from massive training corpora combined with limited training iterations.

To the best of our knowledge, no prior work has explored membership leakage in scenarios where ODMM models are independently trained on the same dataset without mutual interference. The most relevant work from Yan \textit{et al.}~\cite{yan2023mtl} investigates MIA in the multi-task learning setting, where multiple tasks share an identical backbone and employ task-specific heads, which substantially differs from our work.That is, our work involves multiple models trained on the same dataset, whereas Yan \textit{et al.}~\cite{yan2023mtl} train only a single model on the dataset. In addition, one shortcoming of using the multi-task model is that its utility for a specific task is notably inferior, as the underlying backbone is not optimal for such a task. From the privacy perspective, their investigation differs fundamentally from ours, where new exploration could be derived by correlating information from independently trained ODMMs on the same dataset, thus leaking substantially more privacy than that of a multi-task network (utility and privacy comparisons are in Section~\ref{sec:discussion_mtl}).

\section{Theoretical Foundation: ODMM Privacy Composition}
\label{sec:theory}

We posit that the collective exposure of ODMM models fundamentally amplifies the privacy risk compared to the leakage arising from each model. We note that this claim conceptually aligns with the composition theorem in differential privacy (DP)~\cite{dwork2006calibrating,dwork2014algorithmic,ganta2008composition,dinur2003revealing}.
Specifically, under $\varepsilon$-DP, privacy loss $\varepsilon$ measures the maximum change in output distributions induced by the inclusion or exclusion of a single record. The composition of $k$ independent query-induced-release on the same underlying dataset, where each release satisfies $\varepsilon_i$-DP, incurs a total privacy loss bounded by $\sum_{i=1}^{k}\varepsilon_i$, implying a monotonic degradation of the privacy guarantee with each release. However, directly applying DP composition theorems to our setting is not straightforward, as the ODMM models are standard DL models that have nothing to do with the DP mechanism. Furthermore, we study empirical membership leakage under black-box access rather than the worst-case guarantees of formally private mechanisms. 

To fill this theoretical gap, we establish a novel theoretical framework for \textit{ODMM privacy composition} to formally analyze the amplification of privacy risks under the black-box MIA setting.
Next, we formalize ODMM as a joint observation channel and prove the monotonicity of the optimal attack advantage in the number of exposed ODMM models. We then derive an additive accumulation of privacy leakage under conditional independence, and further analyze how inter-task correlations affect the extent of privacy risk amplification.

\subsection{Joint Observation Channel of ODMM}
We begin by modeling the outputs of ODMM models as a multi-channel information release (Definition~\ref{def:odmm}), and formulate membership inference (MI) as a hypothesis testing over the joint information across channels (Definition~\ref{def:mia-game}).

\begin{definition}[Joint Multi-channel Observations]
\label{def:odmm}
    Let $\mathcal{D}$ be the shared training set and 
    let $z=(x,y)$ be a queried sample. Let 
    $f_1,\ldots,f_k$ be $k$ models independently 
    trained for different tasks on $\mathcal{D}$. 
    Each model induces a channel that maps the input $x$ to an output.
    Under black-box access, the attacker observes the joint output vector across these channels:
    \begin{equation}\label{eq:joint-obs}
      O(z) \;=\; 
      \bigl(f_1(x),\;\ldots,\;f_k(x)\bigr).
    \end{equation}
    %which constitutes a joint observation over multiple channels associated with the same input.
\end{definition}

\begin{definition}[Membership Inference in ODMM]\label{def:mia-game}
Given the joint output $O(z)$ from Definition~\ref{def:odmm}, the membership inference (MI) is a binary hypothesis test:
\[
    H_1:\; z \in \mathcal{D} \quad\text{vs.}\quad H_0:\; z \notin \mathcal{D}.
\]
Let $\mathcal{L}(\cdot \mid \cdot)$ denote the conditional distribution. Denote $P_{\mathrm{in}}^{(k)} = \mathcal{L}\bigl(O(z) \mid z \in \mathcal{D}\bigr)$ and $P_{\mathrm{out}}^{(k)} = \mathcal{L}\bigl(O(z) \mid z \notin \mathcal{D}\bigr)$ for the output distributions under $H_1$ and $H_0$, respectively. 
%The attacker's goal is to distinguish $P_{\mathrm{in}}^{(k)}$ from $P_{\mathrm{out}}^{(k)}$ given a single realization of $O(z)$.
\end{definition}

The two definitions above establish that the MIA in the ODMM setting can be viewed as an empirical approximation to the Bayes-optimal test over the joint output $O(z)$. Therefore, to formally characterize ODMM exacerbate privacy under composition, it suffices to show that access to more ODMM models strictly increases the attacker's ability to distinguish $P_{\mathrm{in}}^{(k)}$ from $P_{\mathrm{out}}^{(k)}$, surpassing what is achievable from any single model output.

\subsection{Monotonicity of Optimal MI Advantage}
Here, we define the MI Advantage as a measure of distinguishability between $P_{\mathrm{in}}^{(k)}$ and $P_{\mathrm{out}}^{(k)}$, and then establish its monotonicity with respect to the number of observed models in ODMM.

\begin{definition}[MI Advantage]
\label{def:advantage}
For an attacker $\mathcal{A}$ observing the joint output of $k$ models, the membership inference advantage is defined as:
\begin{equation}\label{eq:advantage}
    \scalebox{0.85}{$
    \mathrm{Adv}(\mathcal{A}_k) =
    \bigl|\Pr[\mathcal{A}_k(O_{1:k}) = 1 \mid z \in \mathcal{D}]
    - \Pr[\mathcal{A}_k(O_{1:k}) = 1 \mid z \notin \mathcal{D}]\bigr|
    $}
\end{equation}
where $O_{1:k} = (f_1(x), \ldots, f_k(x))$. The 
Bayes-optimal membership inference advantage is 
$\mathrm{Adv}(\mathcal{A}_k^*) = 
\sup_{\mathcal{A}_k} \mathrm{Adv}(\mathcal{A}_k)$.
\end{definition}

The MI advantage measures the ability of an attacker to distinguish between $P_{\mathrm{in}}^{(k)}$ and $P_{\mathrm{out}}^{(k)}$ based on the observed joint outputs. 
It therefore quantifies the performance of the hypothesis test defined in Definition~\ref{def:mia-game}, with higher values indicating more membership leakage from the joint outputs.
%It therefore quantifies the performance of the hypothesis test defined in Definition~\ref{def:mia-game}, with the Bayes-optimal case corresponding to the optimal statistical distinguishability between the two distributions.

\begin{theorem}[Monotonicity of Optimal MI Advantage]\label{thm:monotone}
For the Bayes-optimal attacker, introducing an additional model in the ODMM setting cannot decrease the optimal membership inference advantage:
\begin{equation}\label{eq:monotone}
    \mathrm{Adv}(\mathcal{A}_{k+1}^*) 
    \;\geq\; \mathrm{Adv}(\mathcal{A}_{k}^*).
\end{equation}
\end{theorem}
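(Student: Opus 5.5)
The proof rests on a simulation argument: any attacker that sees only the first $k$ outputs can be reproduced by an attacker that sees $k+1$ outputs and ignores the last one. Let $\mathcal{A}_k$ be an arbitrary (possibly randomized) attacker acting on $O_{1:k}$. Define $\mathcal{A}'_{k+1}$ on $O_{1:k+1}=(O_{1:k},f_{k+1}(x))$ by
\[
\mathcal{A}'_{k+1}(O_{1:k+1}) := \mathcal{A}_k(\pi(O_{1:k+1})),
\]
where $\pi$ is the coordinate projection that drops the last component.

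Next, I check that $\mathcal{A}'_{k+1}$ has exactly the same advantage as $\mathcal{A}_k$. Under either hypothesis, the law of $\pi(O_{1:k+1})$ is the marginal of $P^{(k+1)}_{\mathrm{in}}$ (respectively $P^{(k+1)}_{\mathrm{out}}$) on the first $k$ coordinates. That marginal is precisely $P^{(k)}_{\mathrm{in}}$ (respectively $P^{(k)}_{\mathrm{out}}$), because $f_1,\dots,f_k$ and the sample $z$ are the same objects in both experiments; adding $f_{k+1}$ only appends a coordinate. It follows that
\[
\Pr[\mathcal{A}'_{k+1}=1\mid z\in\mathcal{D}]=\Pr[\mathcal{A}_k=1\mid z\in\mathcal{D}],
\]
and the same equality holds under $z\notin\mathcal{D}$. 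Hence $\mathrm{Adv}(\mathcal{A}'_{k+1})=\mathrm{Adv}(\mathcal{A}_k)$. Taking the supremum over all $(k+1)$-attackers gives $\mathrm{Adv}(\mathcal{A}^*_{k+1})\ge \mathrm{Adv}(\mathcal{A}_k)$. Since $\mathcal{A}_k$ was arbitrary, a further supremum over $\mathcal{A}_k$ yields \eqref{eq:monotone}.

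As a cross-check, I would also give the more quantitative formulation. The Bayes-optimal advantage equals the total variation distance, $\mathrm{Adv}(\mathcal{A}^*_k)=\mathrm{TV}(P^{(k)}_{\mathrm{in}},P^{(k)}_{\mathrm{out}})$, with the optimum attained by the likelihood-ratio test. Total variation is an $f$-divergence, so it satisfies the data-processing inequality under the measurable map $\pi$. This directly gives $\mathrm{TV}(P^{(k)}_{\mathrm{in}},P^{(k)}_{\mathrm{out}})\le \mathrm{TV}(P^{(k+1)}_{\mathrm{in}},P^{(k+1)}_{\mathrm{out}})$.

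The step most in need of care is the marginal-consistency claim. The setup must be stated so that the randomness in the model training and in the choice of $z$ is shared across the $k$- and $(k+1)$-model experiments; that is, the $(k+1)$-th model is added without retraining or altering $f_1,\dots,f_k$. With the independent training described in Definition~\ref{def:odmm}, this holds. I would state it explicitly as the modeling assumption under which the projection argument applies.
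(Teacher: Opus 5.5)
Your proposal is correct and takes essentially the same route as the paper's proof. The paper also lifts any test on $O_{1:k}$ to a test on $O_{1:k+1}$ that ignores the last output (phrased as $\sigma(O_{1:k})\subseteq\sigma(O_{1:k+1})$), so the supremum defining $\mathrm{Adv}(\mathcal{A}_{k+1}^*)$ ranges over a larger class; your explicit check that the first-$k$ marginals of $P^{(k+1)}_{\mathrm{in}}$ and $P^{(k+1)}_{\mathrm{out}}$ coincide with $P^{(k)}_{\mathrm{in}}$ and $P^{(k)}_{\mathrm{out}}$, which the paper leaves implicit, and your total-variation data-processing cross-check are sound additions.
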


\noindent \emph{Proof.} See Appendix~\ref{app:monotone}. 
Theorem~\ref{thm:monotone} holds without additional assumptions. It implies that evaluating ODMM models in isolation, as standard in prior MIA literature, inevitably underestimates the cumulative privacy exposure.

%\subsection{Leakage Accumulation under Conditional Independence}
\subsection{Additive Leakage under Conditional Independence}
\label{sec:additivity}

Theorem~\ref{thm:monotone} only guarantees that the optimal MI advantage is non-decreasing as more ODMM models are exposed. Here, we further prove that, under the assumption of approximate independence among per-model leakage, the privacy leakage from each ODMM model accumulates additively.

\begin{assumption}[Conditional Independence of Leakage]\label{asmp:cond-indep}
Each model $f_i$ produces a per-sample leakage $S_i(z)$ derived from its black-box output. Conditional on the membership status $M \in \{0,1\}$, these leakages are approximately independent:$
  S_1(z), \ldots, S_k(z) 
  \;\perp\!\!\!\perp\; 
  \mid\; M.$
\end{assumption}

Assumption~\ref{asmp:cond-indep} represents an idealized scenario in which each model's leakage is driven solely by its own task-specific memorization, with no inter-task correlation. 
% Under this condition, we then prove that the optimal MI advantage is strictly increasing with the number of exposed models.

\begin{theorem}[Additivity of Log-Likelihood Ratio]\label{thm:llr-decomp}
By the Neyman--Pearson lemma, the optimal attack for distinguishing $P_{\mathrm{in}}^{(k)}$ from $P_{\mathrm{out}}^{(k)}$ is given by a threshold test on the joint log-likelihood ratio $\Lambda_k(z)$. Consequently, the optimal MI advantage $\mathrm{Adv}(\mathcal{A}_k^*)$ is characterized by the distribution of $\Lambda_k(z)$.
Under Assumption~\ref{asmp:cond-indep}, $\Lambda_k(z)$ decomposes additively as:
\begin{equation}
\label{eq:llr}
  \Lambda_k(z) \;=\; \sum_{i=1}^{k} 
  \log \frac{p_i(S_i(z) \mid z \in \mathcal{D})}
  {p_i(S_i(z) \mid z \notin \mathcal{D})}.
\end{equation}
where $p_i(\cdot \mid \cdot)$ denotes the distribution of model's leakage. Under Assumption~\ref{asmp:cond-indep}, each additional model contributes an independent non-negative term to $\Lambda_k$, increasing the separability between $P_{\mathrm{in}}^{(k)}$ and $P_{\mathrm{out}}^{(k)}$ as $k$ grows. As a result, the optimal MI advantage $\mathrm{Adv}(\mathcal{A}_k^*)$ increases monotonically with $k$.
%This additive decomposition implies that $\Lambda_k(z)$ aggregates independent evidence across models, increasing its discriminative power between members and non-members, and consequently $\mathrm{Adv}(\mathcal{A}_k^*)$ grows monotonically with $k$.
\end{theorem}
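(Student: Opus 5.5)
The plan proceeds in three steps: set up the likelihood ratio test, factorize it using Assumption~\ref{asmp:cond-indep}, and then derive monotonicity.

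\textbf{Step 1: the optimal test.} I will treat MI as the simple-vs-simple test of Definition~\ref{def:mia-game}, and treat the leakage vector $S_{1:k}(z)=(S_1(z),\ldots,S_k(z))$ as the attacker's observation. Either $S_{1:k}$ is a sufficient statistic of $O_{1:k}$ for $M$, or we simply restrict attention to attackers acting on $S_{1:k}$. I assume densities $p^{(k)}_{\mathrm{in}}$ and $p^{(k)}_{\mathrm{out}}$ exist with respect to a common dominating measure. The Neyman--Pearson lemma then says that for every false-positive level the most powerful test thresholds $\Lambda_k=\log\bigl(p^{(k)}_{\mathrm{in}}/p^{(k)}_{\mathrm{out}}\bigr)$. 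Taking the supremum over levels (and randomized tests) gives
\[
\mathrm{Adv}(\mathcal{A}_k^*)=\mathrm{TV}\bigl(P^{(k)}_{\mathrm{in}},P^{(k)}_{\mathrm{out}}\bigr)=\Pr_{\mathrm{in}}[\Lambda_k>0]-\Pr_{\mathrm{out}}[\Lambda_k>0].
\]
So the optimal advantage is a functional of the law of $\Lambda_k$ under the two hypotheses.

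\textbf{Step 2: factorization.} Under Assumption~\ref{asmp:cond-indep}, the joint conditional density factorizes as $p^{(k)}(s_{1:k}\mid M)=\prod_i p_i(s_i\mid M)$. Taking the log of the ratio gives \eqref{eq:llr} directly. Each summand $\lambda_i$ is independent across $i$ under each hypothesis, so $\Lambda_{k+1}=\Lambda_k+\lambda_{k+1}$ with $\lambda_{k+1}$ independent of $\Lambda_k$. For the claim of a ``non-negative'' contribution, I will interpret it in expectation: $\mathbb{E}_{\mathrm{in}}[\lambda_i]=\mathrm{KL}(p_i(\cdot\mid 1)\,\|\,p_i(\cdot\mid 0))\ge 0$ and $\mathbb{E}_{\mathrm{out}}[\lambda_i]=-\mathrm{KL}(p_i(\cdot\mid 0)\,\|\,p_i(\cdot\mid 1))\le 0$. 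Hence the mean gap $\mathbb{E}_{\mathrm{in}}\Lambda_k-\mathbb{E}_{\mathrm{out}}\Lambda_k$, which is the Jeffreys divergence, is a sum of non-negative per-model terms and grows additively with $k$.

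\textbf{Step 3: monotonicity of the advantage.} The advantage equals total variation, and total variation between product measures is non-decreasing when a factor is appended. This follows because the marginal onto the first $k$ coordinates is a Markov kernel, and data processing gives $\mathrm{TV}(P^{(k+1)}_{\mathrm{in}},P^{(k+1)}_{\mathrm{out}})\ge \mathrm{TV}(P^{(k)}_{\mathrm{in}},P^{(k)}_{\mathrm{out}})$. This is the same argument as Theorem~\ref{thm:monotone}, which I will invoke directly. To upgrade this to strict increase and quantify growth, I will use the product structure. Hellinger affinity multiplies, $1-H^2(P^{(k)}_{\mathrm{in}},P^{(k)}_{\mathrm{out}})=\prod_i(1-H_i^2)$. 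Combined with $\mathrm{TV}\ge H^2$, this gives $\mathrm{Adv}(\mathcal{A}_k^*)\ge 1-\prod_{i\le k}(1-H_i^2)$, which increases strictly whenever a new model has $H_{k+1}>0$, that is, whenever it leaks at all.

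\textbf{Main obstacle.} The hardest part is the informal phrasing ``each term is non-negative''. Pointwise, $\lambda_i$ can be negative, so additivity of $\Lambda_k$ does not by itself yield monotone separability. I will therefore make the statement rigorous through the expectation and KL/Hellinger argument above rather than pointwise. A second point to watch is that the assumption is stated for leakage scores $S_i$ and not for raw outputs. I will state explicitly that optimality is relative to the score-based observation, or else assume sufficiency.
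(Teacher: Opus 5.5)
Your Step~2 is the paper's entire proof: the appendix writes $\Lambda_k$ as the log-ratio of the joint conditional densities of $(S_1,\dots,S_k)$, factorizes it under Assumption~\ref{asmp:cond-indep}, and takes the log of the product. It never argues the ``non-negative term'' or monotone-advantage clauses. Most of what you add is sound and worth keeping:
\begin{itemize}
\item identifying $\mathrm{Adv}(\mathcal{A}_k^*)$ with $\mathrm{TV}(P^{(k)}_{\mathrm{in}},P^{(k)}_{\mathrm{out}})$;
\item the caveat that optimality is only relative to the score observation unless $S_{1:k}$ is sufficient for $O_{1:k}$;
\item replacing the false pointwise non-negativity by $\mathbb{E}_{\mathrm{in}}[\lambda_i]=\mathrm{KL}\ge 0$, together with the additive Jeffreys divergence;
\item non-strict monotonicity by data processing. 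This is just Theorem~\ref{thm:monotone} and does not use the independence assumption at all.
\end{itemize}

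The step that fails is the ``upgrade to strict increase.'' The inequality $\mathrm{Adv}(\mathcal{A}_k^*)\ge 1-\prod_{i\le k}(1-H_i^2)$ only shows that a \emph{lower bound} rises when $H_{k+1}>0$. It says nothing about whether the advantage itself moves, and strict increase is in fact false even under exact conditional independence.

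For a counterexample, take $P^{(1)}_{\mathrm{in}}=(0.6,0.4)$ and $P^{(1)}_{\mathrm{out}}=(0.4,0.6)$, so the advantage is $0.2$. Append an independent model with $Q_{\mathrm{in}}=(0.51,0.49)$ and $Q_{\mathrm{out}}=(0.49,0.51)$, so $H_2>0$. The four cell differences of the product laws are $0.110,\,0.090,\,-0.090,\,-0.110$, so the advantage is still exactly $0.2$. Meanwhile your bound only creeps from about $0.0202$ to $0.0204$.

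In general, the gain $\mathrm{Adv}(\mathcal{A}_{k+1}^*)-\mathrm{Adv}(\mathcal{A}_k^*)$ equals $\int\bigl(p^{(k+1)}_{\mathrm{in}}-p^{(k+1)}_{\mathrm{out}}\bigr)\bigl(\mathbf{1}\{\Lambda_{k+1}>0\}-\mathbf{1}\{\Lambda_k>0\}\bigr)$. This is positive only if adding $\lambda_{k+1}$ pushes $\Lambda_k$ across $0$ with positive probability. A weak model with $|\lambda_{k+1}|\le|\Lambda_k|$ almost surely contributes nothing, as in the example above.

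So drop the strictness claim; the statement's ``increases monotonically'' is already covered by your non-strict argument. If you want a quantitative growth result, state what the Hellinger product actually gives: $\mathrm{Adv}(\mathcal{A}_k^*)\ge 1-\exp\bigl(-\sum_{i\le k}H_i^2\bigr)$. Hence $\mathrm{Adv}(\mathcal{A}_k^*)\to 1$ whenever $\sum_i H_i^2=\infty$.
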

%\begin{theorem}[Additivity of MI Advantage]\label{thm:llr-decomp}
%By the Neyman--Pearson lemma, the optimal MI advantage $\mathrm{Adv}(\mathcal{A}_k^*)$ is determined by the joint log-likelihood ratio $\Lambda_k(z)$ for distinguishing $P_{\mathrm{in}}^{(k)}$ from $P_{\mathrm{out}}^{(k)}$. Under Assumption~\ref{asmp:cond-indep}, $\Lambda_k(z)$ decomposes additively as:

\noindent \emph{Proof.} See Appendix~\ref{app:llr-decomp}. 
Theorem~\ref{thm:llr-decomp} reveals that ODMM privacy risk amplification arises from aggregating membership leakage information across tasks, similar to the accumulation of privacy loss in the DP composition theorem. Even when each ODMM model provides only a weak MIA signal, the joint log-likelihood ratio can act as a strong discriminator. 

\noindent\textbf{Empirical evidence.} This is supported by the experiments (Section~\ref{sec:ablation_num_models}), where AUC increases monotonically as the number of exposed models grows from 3 to 6 (0.656 $\to$ 0.706 $\to$ 0.743 $\to$ 0.782). %with each additional model contributing a positive incremental gain. 
These results confirm that each additional model monotonically accumulates privacy leakage that composes into an additive membership risk.

\subsection{Amplification Scaling with Task Correlation}
\label{sec:diversity}

Theorem~\ref{thm:llr-decomp} establishes that privacy leakage accumulates additively under the idealized assumption of conditional independence. In practice, however, ODMM models trained on the same dataset for different tasks inevitably exhibit correlated memorization patterns. This is fundamentally different from per-query DP analysis, where each query output is protected by independently and identically sampled noise, typically drawn from Gaussian or Laplace distributions. By recognizing this distinction, we further analyze how inter-task correlations modulate the degree of privacy amplification.

\begin{theorem}[Amplification Degree with Task Correlation]\label{thm:keff}
Let $w_i$ denote each model's standalone leakage strength, and let $\rho_{ij}$ denote the pairwise correlation between models $i$ and $j$'s leakage. The overall privacy amplification degree from $k$ ODMM models is quantified by:
\begin{equation}\label{eq:keff}
    \alpha \;=\; 
    \frac{\bigl(\sum_{i=1}^{k} 
    w_i\bigr)^2}{\sum_{i=1}^{k}\sum_{j=1}^{k} 
    w_i\, w_j\, \rho_{ij}}.
\end{equation}
%where $\alpha$ quantifies the cumulative amplification after accounting for inter-task correlations. 
It follows that $1 \le \alpha \le k$, with larger $\alpha$ indicating higher amplification effects. Specifically, $ \alpha = k$ when all task-specific models are perfectly diverse ($\rho_{ij} = 0,\, i \neq j$), consistent with the scenario in Theorem~\ref{thm:llr-decomp}; and $1 < \alpha < k$ in the realistic setting, where each model contributes positively but with magnitude governed by the divergence of its memorization patterns relative to the existing task ensemble ( $0 < \rho_{ij} < 1$).
\end{theorem}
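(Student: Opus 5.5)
The plan is to model the per-model leakage scores as a Gaussian-type signal. Normalize each $S_i(z)$ so that its conditional variance is one under both hypotheses, and write its mean shift between $H_1$ and $H_0$ as $w_i$. Let $\rho_{ij}$ be the correlation of $S_i$ and $S_j$, so $\rho_{ii}=1$ and $R=(\rho_{ij})$ is positive semidefinite. Aggregate the channels with the linear statistic $T=\sum_i w_i S_i$, which is the natural correlated analogue of the additive log-likelihood ratio in Theorem~\ref{thm:llr-decomp}. This statistic has the following moments:
\begin{equation*}
\mathbb{E}[T\mid H_1]-\mathbb{E}[T\mid H_0]=\textstyle\sum_i w_i^2, \qquad \mathrm{Var}(T)=\textstyle\sum_{i,j} w_iw_j\rho_{ij}.
\end{equation*}
I will not optimize the weights; with the paper's normalization I will define the amplification degree as the ratio of the aggregate's effective number of independent contributions to a single one. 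The cleanest reading of Eq.~\eqref{eq:keff} is as an ``effective number of channels'' $k_{\mathrm{eff}}$. If each standalone signal has strength $w_i$, the aggregate mean shift is $\sum_i w_i$ and the aggregate variance is $w^\top R w$. The quantity $\alpha=(\mathbf 1^\top w)^2/(w^\top R w)$ is then the squared signal-to-noise ratio of the sum, relative to one unit-variance channel. The first step is to state this identification explicitly as the definition underlying $\alpha$.

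Next come the bounds. For the upper bound I use Cauchy--Schwarz in the form $(\sum_i w_i)^2\le k\sum_i w_i^2$. The case $\rho_{ij}=0$ for $i\ne j$ makes the denominator $\sum_i w_i^2$, which yields $\alpha\le k$. Equality holds iff all $w_i$ are equal, and the statement's ``$\alpha=k$'' implicitly takes equal strengths. To extend $\alpha\le k$ to general correlations, I would restrict to nonnegative correlations $\rho_{ij}\ge 0$, as in the realistic regime $0<\rho_{ij}<1$. Then $w^\top R w\ge\sum_i w_i^2$, and the bound follows from the independent case.

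For the lower bound I use $\rho_{ij}\le 1$ together with $w_i\ge 0$. This gives $w^\top Rw\le(\sum_i w_i)^2$, hence $\alpha\ge 1$, with equality iff $\rho_{ij}=1$ for all pairs carrying positive weight, i.e.\ the perfectly redundant case. The strict inequalities $1<\alpha<k$ follow from strictness in these two comparisons when $0<\rho_{ij}<1$ and at least two $w_i>0$. Finally, monotonicity of $\alpha$ in each $\rho_{ij}$ holds because the denominator is increasing in $\rho_{ij}$. This gives the interpretation that less correlated, more heterogeneous tasks yield more amplification.

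The main obstacle is the first step: justifying that Eq.~\eqref{eq:keff} genuinely quantifies privacy amplification, rather than treating it as a definition. My approach will be a Gaussian surrogate via a CLT or second-order expansion of each per-model log-likelihood ratio. Under that surrogate, the advantage of a threshold test on $T$ is $2\Phi(\mathrm{SNR}/2)-1$, a monotone function of the squared SNR. The ratio $\alpha$ therefore controls the Bayes-optimal advantage in the sense of Theorem~\ref{thm:monotone}. I would flag that outside this regime the claim is heuristic, and that negative correlations could push $\alpha$ above $k$, which is why nonnegative correlations must be assumed.
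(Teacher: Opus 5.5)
Your treatment of the bounds is correct and supplies something the paper's proof does not. The appendix only identifies $\alpha$ as the squared SNR of the summed statistic: shift $\sum_i w_i$, variance $\sum_{i,j}\sigma_i\sigma_j\rho_{ij}$, then $w_i=\sigma_i$. It never argues $1\le\alpha\le k$ or the equality cases. Your sandwich $\sum_i w_i^2\le w^\top Rw\le(\sum_i w_i)^2$, valid under $w_i\ge 0$ and $0\le\rho_{ij}\le 1$, together with Cauchy--Schwarz, does this. Your caveats are genuine corrections to the statement:
\begin{itemize}
\item $\rho_{ij}=0$ for $i\neq j$ gives $\alpha=k$ only for equal $w_i$; for example, $w=(1,2)$ gives $\alpha=9/5$.
\item $\alpha\le k$ fails under negative correlation; for example, $w=(1,1)$ with $\rho_{12}=-1/2$ gives $\alpha=4>2$.
\end{itemize}
Make $w_i\ge 0$ a standing hypothesis, since the upper bound needs it as well.

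The weak point is the modeling step you yourself flag, which is inconsistent as written. With unit-variance $S_i$ of shift $w_i$, your $T=\sum_i w_iS_i$ has squared SNR $(\sum_i w_i^2)^2/(w^\top Rw)$, not $\alpha$. The plain sum has variance $\mathbf{1}^\top R\mathbf{1}$, not $w^\top Rw$. So ``the aggregate mean shift is $\sum_i w_i$ and the aggregate variance is $w^\top Rw$'' does not follow from your own normalization.

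Eq.~\eqref{eq:keff} comes out only under the paper's normalization: within-class standard deviation $\sigma_i=w_i$ with the plain sum, or equivalently standardized channels of unit shift weighted by $w_i$. That forces every channel to have standalone SNR exactly $1$, so $w_i$ is an aggregation weight rather than a standalone strength. It is also not ``without loss of generality'' as the appendix claims, because rescaling $S_i$ preserves $w_i/\sigma_i$. Adopt it openly as the defining model, and make $\rho_{ij}$ explicitly the within-class correlation. The appendix's unconditional $\mathrm{Var}(S_i)$ and $\mathrm{Corr}(S_i,S_j)$ absorb the mean shift, and only the conditional version makes $\rho_{ij}=0$ match the conditional-independence scenario of Theorem~\ref{thm:llr-decomp}.

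Finally, your Gaussian surrogate overreaches. For correlated channels the sum is not the Neyman--Pearson statistic. The optimal squared SNR is the Mahalanobis quantity $\mathbf{1}^\top R^{-1}\mathbf{1}$, using standardized shifts $\delta=\mathbf{1}$ in this model, and it satisfies $\mathbf{1}^\top R^{-1}\mathbf{1}\ge\alpha$. You therefore get only the lower bound $\mathrm{Adv}(\mathcal{A}_k^*)\ge 2\Phi(\sqrt{\alpha}/2)-1$, not control of the Bayes-optimal advantage. Likewise, the decrease in $\rho_{ij}$ is a property of the sum test, not of the optimal attacker. In general, a signal channel with standardized shift $\delta_1$ paired with an uninformative channel whose noise has correlation $\rho\in[0,1)$ with it gives optimal squared SNR $\delta_1^2/(1-\rho^2)$, which increases in $\rho$.
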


\noindent \emph{Proof.} See Appendix~\ref{app:keff}. 
Theorem~\ref{thm:keff} bridges the gap between the idealized additivity in Theorem~\ref{thm:llr-decomp} and real-world ODMM deployments considered in our work, where task correlations are inherently non-zero.
It identifies task divergence as the key factor governing privacy risk amplification, implying that combining \textit{heterogeneous (less correlated) tasks} (lower $\rho_{ij}$, higher $\alpha$) leads to more leakage than homogeneous ones.

\noindent\textbf{Empirical evidence.} 
This is supported by the experiments (Section~\ref{sec:divergence}), where task correlation $\rho_{ij}$ is operationalized as the cosine distance between two ODMM models' mean 512-dim penultimate-layer avgpool representations, and the per-pair amplification is measured by $\Delta\mathrm{AUC}(f_{\rm anchor}, f_j) = \mathrm{AUC}(f_{\rm anchor}{+}f_j) - \mathrm{AUC}(f_{\rm anchor})$. On seven CelebA ResNet-18 models with Male (gender classification) as the anchor, the most divergent partners (Nose, SHair) yield the largest amplifications ($\Delta\mathrm{AUC}$ +0.1584, +0.1432), while the least divergent (Hat) yields the smallest (+0.0165). These results confirm that more divergent tasks contribute more non-redundant leakage, substantially increasing the privacy amplification degree $\alpha$.

\subsection{Theoretical Foundations of \name}
Theorems~\ref{thm:monotone}--\ref{thm:keff} suggest that privacy leakage accumulates across ODMM models, and the extent of this amplification is governed by inter-task correlation. Motivated by these insights, we propose \name with two key designs:

\mypara{Task-Specific Feature Extraction (Section~\ref{sec:feature}):}
Since leakage strength depends on task-specific memorization and their correlations, we extract task-tailored features to preserve discriminative signals. This design captures heterogeneous leakage patterns across models, ensuring that non-redundant (i.e., weakly correlated) signals are retained.

\mypara{Joint Information Fusion (Section~\ref{sec:joint}):}
To exploit the compositional nature of leakage, we aggregate task-specific features into a joint representation that approximates the cumulative evidence across models. This fusion enables the attacker to combine complementary leakage signals, effectively amplifying the overall membership inference power.

% ======================================================
\section{Threat Model and Methodology}
\label{sec:methodology}
% ======================================================
% ======================================================

\subsection{Threat Model}
\label{threat}
% ======================================================
% Below defines the considered threat model of \name by specifying the adversary's knowledge, capability, and objective.

\mypara{Adversary Knowledge.}
We assume that the adversary can identify that ODMM models share the same data source through several realistic avenues.
First, when a service provider releases multiple task-specific models in a public model zoo, the training dataset is typically disclosed with documentation. For example, Facebook AI Research's Detectron2~\cite{detectron2} explicitly documents that its object detection, instance segmentation, and keypoint estimation models are all trained on the COCO~\cite{lin2014microsoft}. Second, model cards on platforms such as Hugging Face routinely expose metadata, e.g., the training dataset. For instance, Hugging Face Inference API hosts both question answering~\cite{roberta_squad2_deepset} and question generation~\cite{t5_qa_qg_valhalla} models developed by different providers, which explicitly report SQuAD~\cite{rajpurkar2016squad} as their training dataset.
%For instance, \textit{deepset/roberta-base-squad2}~\cite{roberta_squad2_deepset} for question answering and \textit{valhalla/t5-base-qa-qg-hl}~\cite{t5_qa_qg_valhalla} for question generation, developed by different providers, both explicitly report SQuAD~\cite{rajpurkar2016squad} as their training dataset, and the models are directly accessible via the Hugging Face Inference API.
Following the widely adopted black-box MIAs setting~\cite{pang2025black,chen2024slmia,li2024seqmia,he2024difficulty,meeus2024did}, we assume that the adversary can query ODMM models and obtain only their posterior probability distributions.
Following~\cite{li2024seqmia, he2024difficulty, ZarifzadehLS24}, the adversary is further assumed to possess a local shadow dataset $\mathcal{D}_{s}$ that has the same data distribution as the victim’s dataset but without overlapped samples. 
%\garrison{Here, we need to identify how the adversary can know those models are trained on the same dataset? One scenario is when APIs from multiple ODMM models are released by the same provider. It is more likely that the dataset is reused. In other cases, the model provider may provide meta information about the model, such as when it is updated, which dataset is used. Can we figure out if this is the case through real-world examples?}
% Moreover, the adversary is aware of the architectures used for each task-specific model, which can be stolen in a black-box manner via side-channel information~\cite{yan2020cache,gao2024deeptheft}.

\mypara{Adversary Capability.} The adversary can neither access the internal parameters of ODMM models nor tamper with the underlying model before or after its deployment, such as data poisoning~\cite{ma2024watch} or fault injection~\cite{li2024yes}. However, the adversary can leverage $\mathcal{D}_{s}$ to train a set of shadow models corresponding to each task-specific victim ODMM model.
 
\mypara{Adversary Goal.}
For a given target sample, the adversary aims to infer whether it is a member of a victim dataset used to train ODMM models.

%\subsection{Insight}
% ======================================================
%The core premise of our work relies on the hypothesis of \textbf{memorization complementarity} among ODMM models trained for different tasks on the same data source. As exempfilied in Figure~\ref{fig:insight}, while existing MIAs typically target a single model $f_{\theta}$ performing a specific task $\mathcal{T}$, applications can often involve a dataset $\mathcal{D}$ being repurposed to train multiple models $\{f_{\theta_1}, f_{\theta_2}, ..., f_{\theta_k}\}$ for distinct tasks $\{\mathcal{T}_1, \mathcal{T}_2, ..., \mathcal{T}_k\}$ (e.g., age regression, gender classification, and race detection on face images).

\begin{figure}[t]
    \centering
    \includegraphics[width=\linewidth]{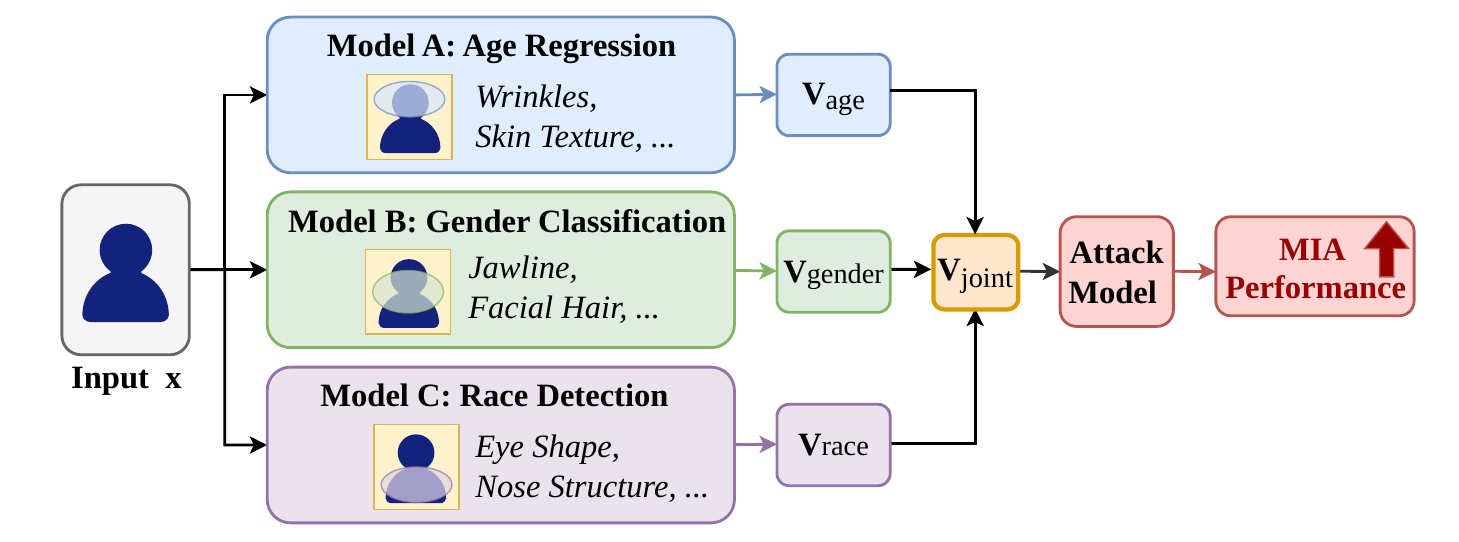}
    \caption{Overview of \name that builds on ODMM privacy composition.%\garrison{increase font size}
    % Given the same input sample $x$, different task-specific ODMM models focus on distinct feature subspaces (e.g., wrinkles for age, jawline for gender, eye shape for race). Each model produces a feature vector $\mathbf{v}_k$ capturing task-specific membership signals. By fusing these vectors into $\mathbf{V}_{\rm joint}$ to exploit the privacy composition, MIA accuracy amplifies.
    }
    \label{fig:insight}
\end{figure}

%We posit that \textit{different tasks force corresponding ODMM models to focus on varying feature subspaces of the data, rendering non-redundant but complementary privacy leakage}.

%\noindent$\bullet$\textbf{Task-Dependent Memorization}. A model's decision boundary is shaped by its optimization objective. For a complex training sample $x$, Task A (e.g., gender classification) might find $x$ easy to generalize, exhibiting low loss and ``forgetting'' the specific details of $x$. However, Task B (e.g., age regression) might find the same sample $x$ effectively ``hard,'' requiring the model to memorize its idiosyncratic features to minimize the training error. Consequently, $x$ might be exposed as a member by Model B but not by Model A.

%\noindent$\bullet$\textbf{Risk Amplification}. By aggregating the inference signals from multiple tasks, an attacker gains a panoramic view of the data point's status. The collective information vector captures membership traces that are otherwise absent in isolation when only a subset of models (e.g., a single model in an extreme case that is commonly studied in existing literature) is present. Therefore, we hypothesize that the joint privacy risk $\mathcal{R}_{\rm joint}$ strictly exacerbate the risk of any \textit{single} model (e.g., $\mathcal{R}_{\mathcal{T}_1}$ or $\mathcal{R}_{\mathcal{T}_2}$), expressed as: 
%\begin{equation}
%\mathcal{R}_{\rm joint} > \textsf{max}(\mathcal{R}_{\mathcal{T}_1}, \mathcal{R}_{\mathcal{T}_2}, ...).
%\end{equation}

% ======================================================
\subsection{The \name Framework}\label{sec:framework}
% ======================================================
Based on the \textit{ODMM privacy composition} theory in Section~\ref{sec:theory}, we propose \name, a unified framework for systematically conducting joint MIA against multiple APIs, each corresponding to an ODMM model, as shown in \ref{fig:insight}. \name has three phases: \Circled{1} Task-Specific Feature Extraction, \Circled{2} Joint Information Fusion, and \Circled{3} Shadow-Assisted Inference.

\subsubsection{Task-Specific Feature Extraction}
\label{sec:feature}
Consider $K$ models trained on the same private dataset $\mathcal{D}_{\rm private}$ can be queried by the adversary. For a target sample $(x, y)$, the attacker queries each model $f_k$ and obtains the raw outputs. To capture the subtle differences in how members and non-members behave across different OODM tasks (i.e., age regression, binary gender classification, multi-class race classification), we design specific feature extractors $\psi_k(\cdot)$.

Based on model output observations, we extract the following multidimensional feature vectors $\mathbf{v}_k$ for each task type:

\noindent\textbf{Regression Tasks} (e.g., Age Estimation)
For a target sample with ground-truth label $y$, a regression model $f_{\rm reg}$ returns a continuous predicted value $\hat{y}$. We construct a 3-dimensional feature vector focusing on prediction error:
$$\mathbf{v}_{\rm reg} = \left[ \hat{y}, \quad \mathcal{L}_{\rm MSE}(y, \hat{y}), \quad |y - \hat{y}| \right],$$
where $\mathcal{L}_{\rm MSE}$ is the mean squared error (MSE) loss. The absolute error term $|y - \hat{y}|$ is \textit{explicitly included} by our \name as it directly correlates with the model's overfitting on outliers.

\noindent\textbf{Binary Classification Tasks} (e.g., Gender Recognition)
A binary classifier $f_{\rm bin}$ outputs a logit $z$. We compute the probability $p = \sigma(z)$ (Sigmoid) and extract a concatenated vector:
$$\mathbf{v}_{\rm bin} = \left[ p, \quad \mathcal{L}_{\rm BCE}(y, p), \quad |p - 0.5| \right].$$
Here, $|p - 0.5|$ represents the prediction confidence. Member samples often exhibit higher confidence.

\noindent\textbf{Multi-class Classification Tasks} (e.g., Race Classification)
For a classifier $f_{\rm mul}$ with $C$ classes, outputting a probability vector $\mathbf{p} = \textsf{Softmax}(\mathbf{z})$, the information is richer. We construct a $(C+3)$-dimensional vector:
$$\mathbf{v}_{\rm mul} = \left[ \mathbf{p}, \quad \mathcal{L}_{\rm CE}(y, \mathbf{p}), \quad \max(\mathbf{p}), \quad \mathcal{H}(\mathbf{p}) \right],$$
where $\max(\mathbf{p})$ captures the confidence of the predicted class, and $\mathcal{H}(\mathbf{p}) = -\sum p_i \log p_i$ is the prediction entropy. Lower entropy typically signals that the model has ``seen'' the sample during training and is confident in its prediction.

\subsubsection{Joint Information Fusion}
\label{sec:joint}
After querying all $K$ available ODMM models, the attacker possesses a set of feature vectors $\{\mathbf{v}_1, \mathbf{v}_2, ..., \mathbf{v}_K\}$ explained aforesaid. The fusion module aggregates these signals to form a single representation. Our \name framework employs a Concatenation Fusion strategy, which preserves the raw inference signals from all tasks without loss of information. The joint feature vector $\mathbf{V}_{\rm joint}$ is defined as:

$$\mathbf{V}_{\rm joint} = \mathbf{v}_1 || \mathbf{v}_2 || ... || \mathbf{v}_K.$$

Take a later studied case to ease understanding. In our UTKFace setup involving Age (Regression), Gender (Binary), and Race (5 class), the resulting joint vector has a dimensionality of $3 + 3 + (5+3) = 14$. This vector creates a high-dimensional membership footprint of the sample $x$, where different dimensions absorb membership signals from different ``perspectives'' (e.g., entropy from race classification vs. MSE from age regression).

\subsubsection{Shadow Model Assisted Attack Inference}
To determine the membership status of $x$ based on $\mathbf{V}_{\rm joint}$, \name employs a shadow model technique. Its ODMM scenario constructions are as follows.

\noindent\textbf{Task-Specific Shadow Training}. For each target task $\mathcal{T}_k$, where $k \in \{1, ..., K\}$, we independently train a set of $N$ shadow models $\{S_{k,1}, S_{k,2}, ..., S_{k,N}\}$ per task. In our experiments, we set $N=5$. Crucially, to simulate the joint exposure scenario, we enforce \textit{data split consistency} across tasks. The $i$-th shadow model of every task (e.g., $S_{\rm Age, i}$, $S_{\rm Gender, i}$, $S_{\rm Race, i}$) is trained on the exact same shadow training subset $\mathcal{D}_{\rm shadow\_train}^{(i)}$ and evaluated on the same non-member subset $\mathcal{D}_{\rm shadow\_test}^{(i)}$. This alignment ensures that a sample $x$ has a consistent ground-truth membership label $m \in \{0, 1\}$ across all tasks within the same shadow index $i$.

\noindent\textbf{Joint Attack Dataset Construction}. We construct the training dataset for the attack model by aggregating outputs from these aligned shadow models. For each shadow index $i \in \{1, ..., N\}$, we feed the corresponding samples into the shadow model group $\{S_{1,i}, S_{2,i}, ..., S_{K,i}\}$. We extract the task-specific feature vectors $\{\mathbf{v}_{1,i}, \mathbf{v}_{2,i}, ..., \mathbf{v}_{K,i}\}$ and fuse them to form the joint training samples:
    $$\mathbf{V}_{\rm joint}^{(i)} = \mathbf{v}_{1,i} \space || \space \mathbf{v}_{2,i} \space || \space ... \space || \space \mathbf{v}_{K,i}.$$

    The final attack dataset is the union of features collected from all $N$ shadow groups:

    $$\mathcal{D}_{\rm attack} = \bigcup_{i=1}^{N} \left\{ (\mathbf{V}_{\rm joint}^{(i)}(x), m_x) \mid x \in \mathcal{D}_{\rm shadow}^{(i)} \right\},$$
where $m_x=1$ if $x$ is a member of the $i$-th split, and $0$ otherwise.

\noindent\textbf{Attack Model Inference}. A binary attack meta-classifier $\mathcal{A}$ (implemented as a Random Forest) is trained on $\mathcal{D}_{\rm attack}$ to distinguish members from non-members based on the fused ODMM features. The Random Forest classifier is selected for its robustness in handling high-dimensional, heterogeneous feature spaces formed by the concatenation of regression losses, classification entropies, and confidence scores. During the attack phase, the attacker queries the victim's API endpoints to obtain the joint feature vector $\mathbf{V}_{\rm victim}$ and computes the final membership score:

$$\textsf{Score}(x) = \mathcal{A}(\mathbf{V}_{\rm victim}) \in [0, 1],$$
where the score represents the predicted probability 
of $x$ being a member. A higher score indicates stronger 
evidence of membership. In practice, the continuous score 
is used for threshold-free evaluation via ROC analysis and 
AUC, which provide a more comprehensive assessment of 
attack effectiveness across all operating points.

% ======================================================
\section{Evaluation}
\label{sec:evaluation}
% ======================================================

We conduct comprehensive experiments to systematically evaluate the effectiveness of \name across diverse domains and model architectures. We first present our general experimental setup (Section~\ref{sec:eval_setup}), then validate our approach on the facial scenario with an in-depth analysis that empirically confirms the ODMM privacy composition theory (Section~\ref{sec:eval_face}). We further extend our evaluation to medical imaging scenarios with heterogeneous architectures (Section~\ref{sec:eval_medical}) and LLM instruction fine-tuning scenarios (Section~\ref{sec:eval_llm}).
We also present a comparison between shared-backbone multi-task learning and independent single-task models (i.e, ODMM) (Section~\ref{sec:discussion_mtl}). Finally, we examine how the number of exposed task-specific models influences \name effectiveness (Section~\ref{sec:ablation_num_models}), corresponding to Theorem~\ref{thm:llr-decomp}.
%between multi-task learning with shared backbones and independent single-task models
% ======================================================
\subsection{Experimental Setup}
\label{sec:eval_setup}
% ======================================================

\mypara{Attack Pipeline.}
For each privacy-sensitive evaluation scenario (i.e., Facial, Medical, and LLM Fine-Tuning), we train $N=5$ shadow models per task. 
% with consistent data splits across all tasks, ensuring that the same samples have identical membership status across different task models.
Task-specific feature vectors are extracted based on output types: 3-dimensional vectors for regression and binary classification tasks (prediction, loss, confidence/error), and $(C+3)$-dimensional vectors for $C$-class classification tasks (class probabilities, loss, max probability, entropy). All other general settings follow Section~\ref{sec:framework}.
% Joint features are formed by concatenation across tasks. We employ Random Forest classifiers as attack meta-classifiers throughout all experiments unless otherwise stated.

\mypara{Evaluation Metrics.}
We use standard MIA evaluation metrics: Attack Accuracy (Acc), Precision (Prec), Recall, F1 Score, and Area Under the ROC Curve (AUC). We also report True Positive Rate at low False Positive Rates (TPR@FPR) where relevant, as it is critical for practical privacy auditing.

% ======================================================
\subsection{Facial Scenario Analysis}
\label{sec:eval_face}
% ======================================================

Facial analysis represents a canonical ODMM scenario, where the same face images are routinely repurposed for multiple prediction tasks such as age estimation, gender classification, and ethnicity recognition. We here evaluate \name in-depth on three representative facial attribute datasets spanning different scales, task compositions, and model architectures.

\subsubsection{Datasets, Tasks, and Model Configuration}
\label{sec:eval_face_setup}

\mypara{Datasets and Tasks.}
UTKFace~\cite{zhifei2017cvpr} comprises over 23,000 face images annotated with age, gender, and ethnicity attributes, representing a realistic ODMM scenario with heterogeneous task types. We configure three tasks: \textit{Age} (regression), \textit{Gender} (binary classification), and \textit{Race} (5-class classification).
\textbf{CelebA}~\cite{liu2015faceattributes} has over 200,000 celebrity face images with 40 binary attribute annotations. We select three binary classification tasks: \textit{Male} (gender), \textit{Mouth\_Slightly\_Open} (expression), and \textit{Big\_Nose} (facial feature).
\textbf{FairFace}~\cite{karkkainenfairface} has 108,000 images with balanced demographic distributions. We employ three multi-class classification tasks: \textit{Race} (7-class), \textit{Age} (9-class), and \textit{Gender} (binary).

\mypara{Model Architectures.}
For UTKFace and CelebA, we employ ResNet-18~\cite{he2016deep} as the backbone architecture. For FairFace, we adopt Vision Transformer (ViT-Base)~\cite{dosovitskiy2021an} with LoRA~\cite{hu2021loralowrankadaptationlarge} fine-tuning ($r=8$, $\alpha=16$), representing a typical parameter-efficient fine-tuning (PEFT) paradigm. Note that within each dataset, all ODMM models share the same backbone architecture (i.e., ResNet-18 for UTKFace/CelebA and ViT for FairFace). The setting with heterogeneous backbones across ODMM models is explored in Section~\ref{sec:eval_medical}. All models are trained for 50 epochs using Adam/AdamW optimizer with learning rate $10^{-3}$ (or $10^{-4}$ for ViT).

\mypara{Joint Features.}
The concatenated joint feature vectors yield 9d (CelebA: $3 \times 3$), 14d (UTKFace: $3 + 3 + 8$), and 25d (FairFace: $2 + 12 + 10 + 1$) representations, respectively.

\subsubsection{Theorem~\ref{thm:keff} Evidence}
\label{sec:divergence}

Theorem~\ref{thm:keff} predicts that the privacy amplification degree $\alpha$ is governed by inter-task correlation: \emph{the more divergent (less correlated) two tasks are, the greater the joint leakage they yield}. To empirically validate this prediction, we conduct a controlled pairwise study on CelebA with 7 representative binary classification tasks sharing an identical ResNet-18 backbone and training configuration: \textit{Male}, \textit{Mouth\_Slightly\_Open} (Mouth), \textit{Big\_Nose} (Nose), \textit{Black\_Hair} (BHair), \textit{Smiling} (Smile), \textit{Straight\_Hair} (SHair), and \textit{Wearing\_Hat} (Hat). Fixing the architecture and dataset isolates inter-task divergence as the only varying factor.

\noindent\textbf{Quantifying Task Divergence.}
Since $\rho_{ij}$ in Theorem~\ref{thm:keff} reflects how similarly two models memorize the shared samples, we operationalize it through their internal representations. For each victim model $f_i$, we extract the 512-dim feature vector from its penultimate \texttt{avgpool} layer for every training (member) sample, and compute the mean representation $\bar{\mathbf{h}}_i$. The pairwise task divergence between $f_i$ and $f_j$ is quantified by the cosine distance:
\begin{equation}
\mathrm{Dist}(f_i, f_j) = 1 - \frac{\bar{\mathbf{h}}_i \cdot \bar{\mathbf{h}}_j}{\|\bar{\mathbf{h}}_i\| \cdot \|\bar{\mathbf{h}}_j\|}.
\end{equation}
A larger $\mathrm{Dist}(f_i, f_j)$ indicates that $f_i$ and $f_j$ encode the same data through more divergent representations, corresponding to a smaller $\rho_{ij}$ in Theorem~\ref{thm:keff}.

\noindent\textbf{Quantifying Privacy Amplification.}
For each pair $(f_{\rm anchor}, f_j)$, we measure the marginal MIA gain that $f_j$ contributes when fused with the anchor $f_{\rm anchor}$:
\begin{equation}
\Delta\mathrm{AUC}(f_{\rm anchor}, f_j) = \mathrm{AUC}(f_{\rm anchor} \!+\! f_j) - \mathrm{AUC}(f_{\rm anchor}),
\end{equation}
where $\mathrm{AUC}(f_{\rm anchor} \!+\! f_j)$ is the \name attack AUC obtained by fusing the two models (Section~\ref{sec:framework}), and $\mathrm{AUC}(f_{\rm anchor})$ is the single-task MIA AUC of the anchor.

\noindent\textbf{Validation Hypothesis.}
If Theorem~\ref{thm:keff} holds, then ideally for any anchor $f_{\rm anchor}$ and any two candidate partners $f_j$, $f_k$:
\begin{equation}
\begin{aligned}
\mathrm{Dist}(f_{\rm anchor}, f_j) > \mathrm{Dist}(f_{\rm anchor}, f_k) \;\to\; \\
\Delta\mathrm{AUC}(f_{\rm anchor}, f_j) > \Delta\mathrm{AUC}(f_{\rm anchor}, f_k).
\end{aligned}
\label{eq:hypothesis}
\end{equation}
That is, fusing the anchor with a more divergent partner expects a larger amplification.

\noindent\textbf{Experimental Result.}
Table~\ref{tab:divergence_hat} reports the result with \textit{Male} as the anchor model. The six candidate partners are listed in descending order of their cosine distance to Male, alongside the resulting $\Delta\mathrm{AUC}$ from joint MIA. The two rankings are nearly perfectly aligned: the most divergent partners (Nose and SHair) jointly contribute the largest amplifications ($+0.1584$ and $+0.1432$), whereas the least divergent partner (Hat) yields the smallest gain ($+0.0165$). Out of $\binom{6}{2}=15$ ordering pairs constructed from these 6 partners, \textbf{14 are consistent} with Eq.~\eqref{eq:hypothesis}; only one exception between Mouth and Smile marginally deviates, where the two distances are closely matched ($0.0670$ vs.\ $0.0650$). This is expected because there are variance factors, such as model accuracy and the partner's single-task MIA vulnerability degree. Specifically, Smile (MIA Acc.=0.5326) itself is more vulnerable to MIA than Mouth (0.0.5264), as detailed in Table~\ref{tab:single_task_celeba} in the Appendix.
% This is potentially caused by training stochastic or a different convergence speed.

\begin{table}[t]
\centering
\caption{Empirical evidence of Theorem~\ref{thm:keff} on CelebA, with \textit{Male} as the anchor model. Each row reports the cosine distance between Male and a partner, and the resulting $\Delta\mathrm{AUC}$ from joint MIA. Partners are sorted by distance.}
\label{tab:divergence_hat}
\resizebox{0.95\linewidth}{!}{
\begin{tabular}{lcc}
\toprule
\textbf{Partner} & \textbf{Dist(Male, Partner)} & $\boldsymbol{\Delta\mathrm{AUC}}$ \\
\midrule
Nose  & 0.0926 & +0.1584 \\
SHair & 0.0799 & +0.1432 \\
BHair & 0.0720 & +0.0993 \\
Mouth & 0.0670 & +0.0526 \\
Smile & 0.0650 & +0.0613 \\
Hat   & 0.0606 & +0.0165 \\
\midrule
\multicolumn{3}{l}{\textit{Distance rank:}\quad Nose $>$ SHair $>$ BHair $>$ Mouth $>$ Smile $>$ Hat} \\
\multicolumn{3}{l}{\textit{$\Delta\mathrm{AUC}$ rank:}\quad Nose $>$ SHair $>$ BHair $>$ Smile $>$ Mouth $>$ Hat} \\
\multicolumn{3}{l}{\textit{Pairwise ordering match: }\textbf{14/15}} \\
\bottomrule
\end{tabular}
}
\end{table}

These observations provide direct empirical support for Theorem~\ref{thm:keff}: more divergent task-specific models, by encoding the shared data through less correlated representations, contribute more non-redundant membership leakage when jointly exposed. Conversely, semantically similar tasks—\textit{e.g.}, \textit{Hat} that captures coarse head-region cues highly correlated with gender-related appearance, or \textit{Mouth} and \textit{Smile} that share heavily overlapping mouth-region features—yield smaller marginal amplification, consistent with the larger $\rho_{ij}$ predicted by the theorem.

\subsubsection{MIA Attack Performance}
\label{sec:eval_face_results}

% We now evaluate the end-to-end attack effectiveness of \name. 
Table~\ref{tab:vision_results} details MIA results of \name and compares them with corresponding single-task attacks.
% and the \name joint attack across all three facial attribute datasets.

\begin{table}[t]
\centering
\caption{MIA on facial scenario. Utility denotes each model's utility, e.g., accuracy or MAE.
% on its original prediction task (validation set accuracy for classification, MAE for regression).
}
\label{tab:vision_results}
\resizebox{\linewidth}{!}{
\begin{tabular}{llcccccc}
\toprule
\textbf{Dataset} & \textbf{Attack Type} & \textbf{Utility} 
& \textbf{MIA Acc.} & \textbf{MIA Prec.} & \textbf{MIA Recall} 
& \textbf{MIA F1} & \textbf{MIA AUC} \\
\midrule
\multirow{5}{*}{\textbf{CelebA}} 
& Single-Male & 98.1\% & 0.512 & 0.512 & 0.519 & 0.516 & 0.521 \\
& Single-Mouth & 98.4\% & 0.530 & 0.525 & 0.627 & 0.572 & 0.544 \\
& Single-Big\_Nose & 98.2\% & 0.591 & 0.587 & 0.608 & 0.598 & 0.649 \\
& \textbf{\name Attack} & -- & \textbf{0.661} & \textbf{0.642} 
& \textbf{0.728} & \textbf{0.682} & \textbf{0.732} \\
\cmidrule{2-8}\rowcolor{green!10}
& \textit{Impr. over Best Single} & -- & \textit{+11.8\%} 
& \textit{+9.4\%} & \textit{+19.7\%} & \textit{+14.0\%} 
& \textit{+12.8\%} \\
\midrule
\multirow{5}{*}{\textbf{UTKFace}} 
& Single-Age & 6.98$^\dagger$ & 0.712 & 0.678 & 0.807 & 0.737 & 0.767 \\
& Single-Gender & 98.5\% & 0.599 & 0.593 & 0.633 & 0.613 & 0.657 \\
& Single-Race & 98.1\% & 0.734 & 0.685 & 0.865 & 0.765 & 0.793 \\
& \textbf{\name Attack} & -- & \textbf{0.869} & \textbf{0.806} 
& \textbf{0.972} & \textbf{0.881} & \textbf{0.925} \\
\cmidrule{2-8}\rowcolor{green!10}
& \textit{Impr. over Best Single} & -- & \textit{+18.4\%} 
& \textit{+17.7\%} & \textit{+12.4\%} & \textit{+15.2\%} 
& \textit{+16.6\%} \\
\midrule
\multirow{5}{*}{\textbf{FairFace}} 
& Single-Race & 72.3\% & 0.687 & 0.622 & 0.951 & 0.752 & 0.745 \\
& Single-Age & 68.5\% & 0.703 & 0.654 & 0.866 & 0.745 & 0.758 \\
& Single-Gender & 91.2\% & 0.551 & 0.529 & 0.950 & 0.679 & 0.577 \\
& \textbf{\name Attack} & -- & \textbf{0.804} & \textbf{0.758} 
& \textbf{0.893} & \textbf{0.820} & \textbf{0.863} \\
\cmidrule{2-8}\rowcolor{green!10}
& \textit{Impr. over Best Single} & -- & \textit{+14.4\%} 
& \textit{+15.9\%} & \textit{-6.1\%} & \textit{+9.0\%} 
& \textit{+13.9\%} \\
\bottomrule
\multicolumn{8}{l}{\small $^\dagger$ MAE for regression; 
lower is better. Improvement is computed relative to the 
\textit{best} single-task baseline per metric.} \\
\end{tabular} 
}
\end{table}

\mypara{Consistent Across Datasets and Architectures.}
The \name consistently outperforms all single-task baselines across three diverse datasets and two distinct model architectures. On CelebA, the \name achieves AUC of 0.732 (\textbf{+12.8\%} over \textbf{best} single-task). On UTKFace with heterogeneous task types, the amplification is most pronounced with AUC reaching 0.925 (\textbf{+16.6\%} over the best). FairFace with ViT+LoRA fine-tuning has similar amplification patterns (AUC: 0.758 $\rightarrow$ 0.863, \textbf{+13.9\%}), indicating that the \name generalizes beyond CNN architectures to modern transformer-based models. 

\mypara{Task/Feature divergence Enhances MIA.}
UTKFace exhibits the strongest privacy amplification, which we attribute to its \textit{heterogeneous task types} (regression + binary + multi-class classification). The regression task captures continuous prediction errors, while multi-class tasks provide rich probability distributions, and these complementary signals jointly reveal membership information invisible to any single task. In contrast, CelebA's three homogeneous binary classification tasks yield comparatively lower yet still substantial amplification. This is consistent with Theorem~\ref{thm:keff}, where heterogeneous task combinations (e.g., Age+Race) yield a more divergent distance than homogeneous ones.

\mypara{Weak Signals Contribute Meaningfully.}
Even tasks with near-random individual attack performance contribute substantially to joint attacks. On CelebA, the Male classifier shows weak MIA performance (AUC=0.521), yet the \name achieves 28.1\% improvement. On FairFace, the Gender task (AUC=0.577) contributes only 9.4\% feature importance but remains essential for the attack. This further validates ODMM privacy composition: different tasks capture \textit{non-redundant} privacy signals that accumulate when fused.

\mypara{Improved Member-Nonmember Balance.}
A notable finding on FairFace is that single-task attacks exhibit severely imbalanced recognition rates (e.g., Gender: 95\% member recognition vs. 15\% non-member recognition), indicating overfitting to membership prediction. The \name dramatically improves non-member recognition (\textbf{+92.2\%} relative improvement: 37.2\% $\rightarrow$ 71.5\%) while maintaining strong member recognition.
% , resulting in more balanced and reliable membership inference.

\begin{figure}[t]
    \centering
    \includegraphics[width=0.95\linewidth]{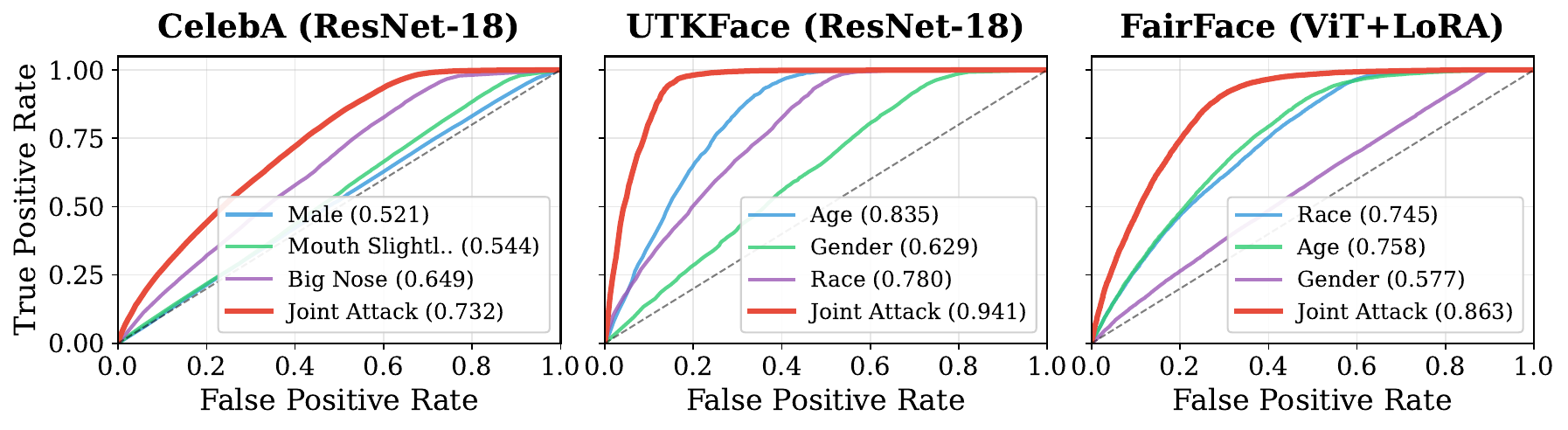}
    \caption{ROC curves comparing single-task attacks versus \name attack across three facial attribute datasets.}
    \label{fig:vision_roc}
\end{figure}

Figure~\ref{fig:vision_roc} further visualizes ROC across all datasets. The \name curves consistently lie far above all single-task curves at any FPR. In the operationally critical low-FPR region (FPR $<$ 0.1), the \name achieves substantially higher TPR, enabling reliable inference with minimal false alarms.

\subsubsection{In-Depth Analysis on UTKFace}
\label{sec:eval_face_indepth}

To gain deeper insight into how multi-task fusion amplifies MIA, we conduct detailed analyses on UTKFace.
% , which serves as a representative ODMM scenario with heterogeneous task types.

\mypara{Confidence Score Distribution.}
Figure~\ref{fig:confidence} shows the distribution of 
attack meta-classifier confidence scores for members and non-members across single-task attacks and the \name.

\begin{figure}[t]
    \centering
    \includegraphics[width=\linewidth]{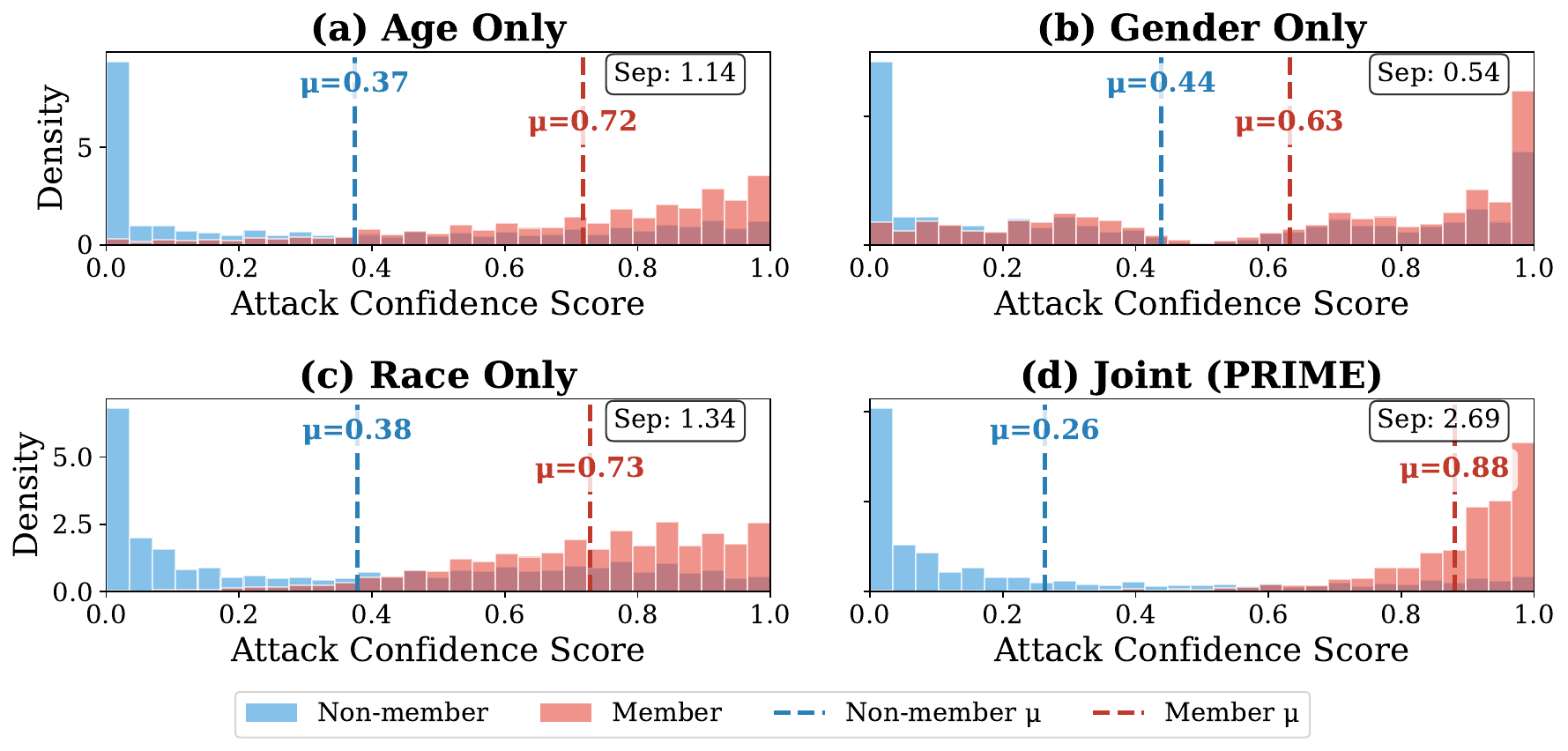}
    \caption{Distribution of attack meta-classifier confidence 
    scores for members (red) and non-members (blue) on UTKFace. 
    Single-task attacks (a--c) exhibit substantial overlap 
    between the two distributions.}
    \label{fig:confidence}
\end{figure}

The three single-task attacks exhibit heterogeneous 
and limited discriminative power:
\textit{Age} achieves moderate separation with member 
mean $\mu_{\rm mem}=0.72$ and non-member mean 
$\mu_{\rm non}=0.37$, but with heavy overlap in the 
0.3--0.6 range.
\textit{Gender} shows the weakest separation 
($\mu_{\rm mem}=0.63$, $\mu_{\rm non}=0.44$), with 
nearly indistinguishable distributions consistent 
with its lowest individual AUC (0.657).
\textit{Race} achieves the best single-task 
discrimination ($\mu_{\rm mem}=0.73$, 
$\mu_{\rm non}=0.38$), yet still exhibits substantial 
distribution overlap.

The \name 
(Figure~\ref{fig:confidence}d) produces a qualitative 
shift in discriminability.
The non-member distribution concentrates below 0.3 
($\mu_{\rm non}=0.26$), while the member distribution 
shifts above 0.8 ($\mu_{\rm mem}=0.88$), yielding 
near-disjoint distributions with minimal overlap.
This stands in stark contrast to the single-task cases.
The resulting AUC improvement from 0.793 (best 
single-task) to 0.925 (\name), as quantified by the ROC analysis in the following, confirms 
that this visual separation translates directly 
into stronger attack performance.

\begin{figure}[t]
    \centering
    \includegraphics[width=0.95\linewidth]{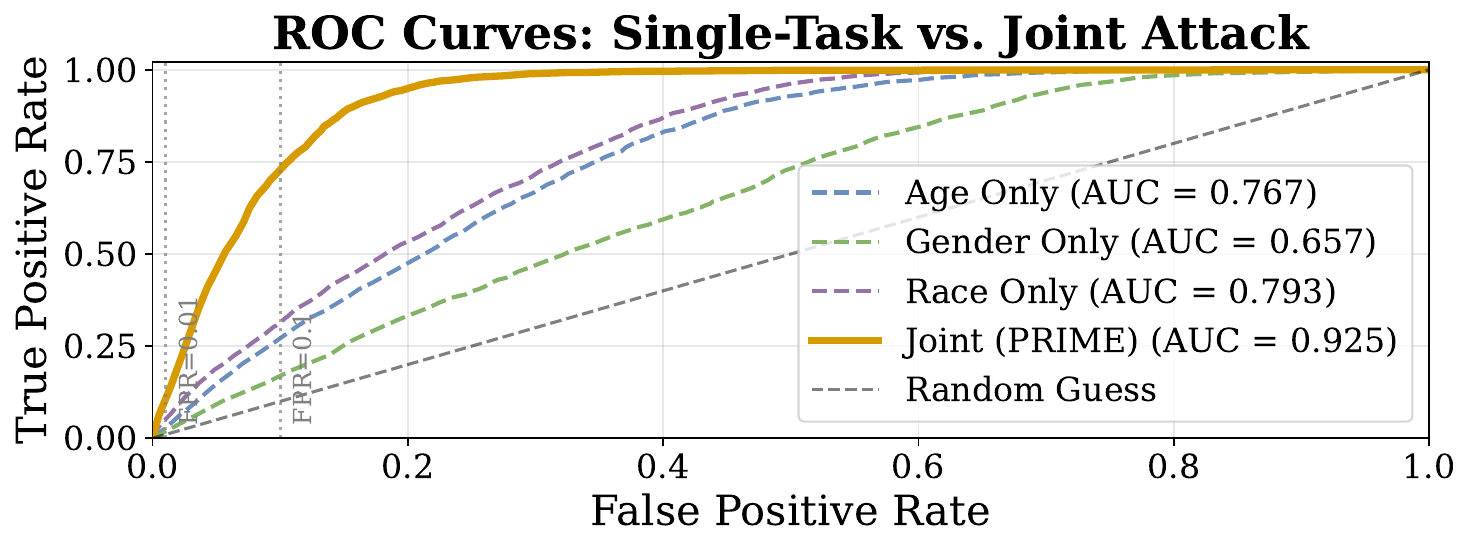}
    \caption{ROC curves (UTKFace).
    \name (AUC=0.925) consistently outperforms all single-task baselines, achieving 16.6\% relative improvement over the \textit{best} single-task attack (Race, AUC=0.793).}
    \label{fig:roc}
\end{figure}

\mypara{ROC Curve Analysis.} Figure~\ref{fig:roc} compares detailed ROC curves on UTKFace.

\noindent$\bullet$\textit{Heterogeneous task-specific vulnerability}: AUC values range from 0.657 (Gender) to 0.793 (Race), with Age at 0.767. This 20\% spread reflects varying memorization patterns induced by different objectives.
    
\noindent$\bullet$\textit{Consistent \name superiority}: The \name curve lies far above all single-task curves at every FPR level, confirming that multi-task fusion provides uniformly stronger attacks rather than task-specific improvements. \name AUC=0.925 represents a \textbf{16.6\% relative improvement} (0.132 absolute) over the \textit{best} single-task baseline.
    
\noindent$\bullet$\textit{High TPR@low-FPR gains}: \name demonstrates substantial advantages in the low FPR regime critical for practical attacks. At FPR=0.01, \name achieves TPR=0.105 compared to TPR=0.051 for the \textit{best} single-task attack (Race), representing a \textbf{2.07$\times$} improvement (+107\% relative gain). At FPR=0.1, the advantage remains pronounced: \name reaches TPR=0.728 versus only TPR=0.315 for best single-task attack, a \textbf{2.32$\times$} improvement.

\subsubsection{Task Contribution Analysis}
\label{sec:eval_face_importance}

Table~\ref{tab:feature_importance} shows the contribution of each task to joint attack effectiveness across all three datasets. Notably, all tasks contribute non-trivially, with the most informative task varying across datasets. This distribution confirms that privacy signals are distributed across tasks rather than concentrated in a single model, consistent with the fact that each ODMM model contributes additivity-like leakage (Theorem~\ref{thm:llr-decomp}).

\begin{table}[t]
\centering
\caption{Feature importance (\%) by task in \name attacks. 
% All tasks contribute non-redundant privacy signals to the attack.
}
\label{tab:feature_importance}
\resizebox{0.90\linewidth}{!}{
\begin{tabular}{lll}
\toprule
\textbf{Dataset} & \textbf{Task} & \textbf{Importance (\%)} \\
\midrule
\multirow{3}{*}{CelebA/UTKFace/ FairFace} 
& Big\_Nose/Race/Age& 47.6/38.2/49.2\\
& Mouth\_Open/Age/Race& 27.6/35.4/41.3\\
& Male/Gender/Gender&24.9/26.4/9.4 \\
% \midrule
% \multirow{3}{*}{UTKFace} 
% & Race & 38.2 \\
% & Age & 35.4 \\
% & Gender & 26.4 \\
% \midrule
% \multirow{3}{*}{FairFace} 
% & Age & 49.2 \\
% & Race & 41.3 \\
% & Gender & 9.4 \\
\bottomrule
\end{tabular}
}
\end{table}

% \subsubsection{Summary of Facial Scenario Analyses}

% These experiments provide compelling evidence that ODMM model exposure substantially amplifies privacy risks. Regardless of dataset scale (23K to 200K images), model architecture (ResNet vs. ViT), training paradigm (full fine-tuning vs. LoRA), or task composition (homogeneous vs. heterogeneous), \name consistently achieves 12.8\% to 16.6\% higher AUC compared to the strongest individual model baseline.

% \begin{mdframed}[backgroundcolor=black!10,rightline=false,leftline=false,topline=false,bottomline=false,roundcorner=2mm]
% \textbf{Key Finding:} ODMM models trained on shared data create correlated but non-redundant privacy leakage. Evaluating each model in isolation---common in existing literature---underestimates the cumulative privacy exposure. ODMM model deployments require holistic privacy auditing that accounts for the amplified risk from collective model access when the same data source is efficiently reused for different task objectives.
% \end{mdframed}

\begin{mdframed}[backgroundcolor=black!10,rightline=false,leftline=false,topline=false,bottomline=false,roundcorner=2mm]
\textbf{Takeaway 1:} Compelling evidence confirms that ODMM model exposure substantially amplifies privacy risks. Regardless of dataset, its scale (23K to 200K images), model architecture (ResNet vs. ViT), training paradigm (full fine-tuning vs. LoRA), or task type composition (homogeneous vs. heterogeneous), \name consistently achieves 12.8\% to 16.6\% higher AUC compared to the \textbf{strongest} individual model baseline.
\end{mdframed}

% ======================================================
\subsection{Medical Imaging Scenario with Heterogeneous Model Architecture}
\label{sec:eval_medical}
% ======================================================

Medical imaging is particularly privacy-sensitive, as patient data often carries highly confidential health information. 
% The exposure of training membership in medical datasets could reveal sensitive diagnoses, treatment histories, or demographic information. 
% We evaluate \name on the HAM10000 dermatology image dataset.

\mypara{Dataset and Tasks.}
HAM10000~\cite{tschandl2018ham10000} is a large-scale dermatoscopic dataset containing 10,015 images of pigmented skin lesions. Each image is annotated with multiple clinical attributes, which can be repurposed for different tasks. We configure three clinically relevant tasks with \textit{heterogeneous} model architectures. \Circled{1} \textit{Diagnosis (dx)} is a 7-class classification (akiec, bcc, bkl, df, mel, nv, vasc) using \textit{ResNet-18}.
\Circled{2} \textit{Age} regression predicts patient age using \textit{VGG-16}. \Circled{3} \textit{Localization} is a 13-class classification of lesion body location using \textit{DenseNet-121}. This setup reflects realistic clinical scenarios where the same dermatoscopic images are analyzed for multiple diagnostic purposes, and different model architectures may be deployed based on task-specific requirements.

\mypara{Lesion-Based Data Splitting.}
A unique characteristic of HAM10000 is that multiple images may correspond to the same skin lesion (identified by \texttt{lesion\_id}). To prevent data leakage between training and evaluation sets, we perform \textit{lesion-level splitting}: all images from the same lesion are assigned to either the member or non-member set, ensuring no overlap at the lesion level. This rigorous splitting strategy makes our privacy evaluation more realistic but challenging.

\mypara{Configuration.}
All models are trained for 50 epochs with Adam optimizer (learning rate $10^{-3}$, batch size 48). We extract task-specific features: 10d for diagnosis (7-class probabilities + loss + max probability + entropy), 3d for age (prediction + loss + error), and 16d for localization (13-class probabilities + loss + max probability + entropy), yielding a 29-dimensional \name joint feature vector.

% \subsubsection{Result and Analysis}
\begin{table}[h]
\centering
\caption{MIA performance (HAM10000). Utility denotes ODMM model utility.
% validation set accuracy for classification or MAE for regression. 
% Improvement is computed relative to the \textit{best} single-task baseline per metric.
}
\label{tab:medical_results}
\resizebox{0.95\linewidth}{!}{
\begin{tabular}{llcccc}
\toprule
\textbf{Attack} & \textbf{Architecture} & \textbf{Utility} 
& \textbf{MIA Acc.} & \textbf{MIA F1} & \textbf{MIA AUC} \\
\midrule
Diagnosis & ResNet-18 & 71.8\% & 0.657 & 0.694 & 0.738 \\
Age & VGG-16 & 13.38$^\dagger$ & 0.516 & 0.520 & 0.521 \\
Localization & DenseNet-121 & 29.1\% & 0.748 & 0.771 & 0.829 \\
\midrule
\textbf{\name} & \textbf{All} & -- & \textbf{0.806} 
& \textbf{0.821} & \textbf{0.886} \\
\midrule
\rowcolor{green!10}
\textit{Impr. over Best Single} & -- & -- 
& \textit{+7.8\%} & \textit{+6.5\%} & \textit{+6.9\%} \\
\bottomrule
\multicolumn{6}{l}{\small $^\dagger$ MAE (Mean Absolute Error) 
for regression task; lower is better.} \\
\end{tabular}
}
\end{table}

\noindent\textbf{Result and Analysis.} Table~\ref{tab:medical_results} presents and compares the MIA performance.
The \name attack achieves AUC of 0.886, representing a \textbf{6.9\%} relative improvement over the \textit{best} single-task attack (0.829).

\noindent$\bullet$\textit{Architecture Heterogeneity.}
Unlike our facial analysis experiments, where tasks within the same dataset shared the same model architecture, HAM10000 employs three distinct CNN architectures (ResNet-18, VGG-16, DenseNet-121). Despite this architectural heterogeneity, the \name attack still achieves substantial privacy amplification. This finding suggests that ODMM privacy composition is primarily dependent on the \textit{data and task} rather than the \textit{model architecture}.

\noindent$\bullet$\textit{Weak Signals Remain Valuable.}
The age regression task exhibits near-random single-task attack performance (AUC = 0.521), suggesting minimal individual privacy leakage. However, feature importance analysis reveals that age-derived features still contribute 6.8\% to the joint attack's predictive power. More importantly, the diagnosis (41.3\%) and localization (52.0\%) tasks provide strong complementary signals that collectively amplify the attack. The localization task, which captures where on the body a lesion appears, proves particularly informative for membership inference, likely because spatial patterns are highly patient-specific.

\noindent$\bullet$\textit{Implications for Clinical AI Deployment.}
These results carry significant implications for healthcare AI systems. Many clinical workflows involve training multiple models on the same patient cohort for different diagnostic or prognostic purposes. Our findings demonstrate that even if individual models appear privacy-safe (i.e., Age with a close to 0.5 AUC) when audited in isolation, their collective deployment creates amplified privacy risks. 
% Healthcare institutions deploying multi-task models on sensitive patient data must consider holistic privacy auditing that accounts for potential joint access to multiple model outputs.

% ======================================================
\subsection{LLM Instruction Fine-Tuning Scenario}
\label{sec:eval_llm}
% ======================================================

The LLMs as foundational models have led to widespread adoption of fine-tuning paradigms, where pre-trained models are adapted to domain-specific tasks using proprietary datasets. We demonstrate that \name extends beyond discriminative vision models, including previously considered ViT, to generative LLMs, revealing that fine-tuning on the same corpus creates substantial privacy amplification risks.

\mypara{Dataset and Tasks.}
We use the Blog Authorship Corpus~\cite{schler2006effects}, containing blog posts annotated with author demographics. We configure three attribute prediction tasks representing diverse prediction types within the generative context. \textit{Gender} is a binary classification (male/female). \textit{Horoscope} is a 12 class classification (zodiac signs). \textit{Age} regression forecasts author's age. Each task fine-tunes the LLM to predict the corresponding attribute based on blog content, simulating realistic scenarios where generative text is analyzed for multiple profiling purposes.

\mypara{Prompt Template and Generation.}
To formulate each prediction task as a generative objective, we design task-specific prompt templates that instruct the LLM to generate the target attribute as free form text. Table~\ref{tab:prompt_template} in the Appendix illustrates the prompt template and example responses for each task. During fine-tuning, the model is trained to generate the response tokens autoregressively, with the prompt tokens masked from the loss computation.

\mypara{Model and Fine-Tuning Configuration.}
We employ Qwen3-1.7B~\cite{qwen3technicalreport} as the base model with LoRA~\cite{hu2021loralowrankadaptationlarge} fine-tuning ($r=16$, $\alpha=32$, dropout=0.05) targeting attention and MLP layers. Each task is fine-tuned independently for 10 epochs with a learning rate of $2 \times 10^{-4}$. We use 10,000 total samples, split equally between target model data and shadow model pool, with 5 shadow models per task.

\mypara{Feature Extraction.}
For LLM-based MIA, we extract 4-dimensional features per task: \textit{loss}, \textit{perplexity}, \textit{confidence} (average maximum token probability), and \textit{entropy} (average token-level entropy). \name concatenates features from multiple tasks, yielding 8d (2-task) or 12d (3-task) feature vectors.

\mypara{Attack Meta-Classifier.}
We evaluate three meta-classifier classifiers: Logistic Regression, Random Forest (RF), and Multi-Layer Perceptron (MLP). Results are reported using the best-performing classifier for each configuration.

% \subsubsection{Result and Analysis}

\begin{table}[t]
\centering
\caption{MIA performance (Qwen3-1.7B Fine-tuning).}
\label{tab:llm_results}
\resizebox{0.75\linewidth}{!}{
\begin{tabular}{lccc}
\toprule
\textbf{Attack Configuration} & \textbf{Accuracy} & \textbf{F1} & \textbf{AUC} \\
\midrule
\multicolumn{4}{l}{\textit{Single-Task Attacks}} \\
\quad Gender & 0.618 & 0.602 & 0.663 \\
\quad Horoscope & 0.787 & 0.779 & 0.832 \\
\quad Age & 0.664 & 0.651 & 0.730 \\
\midrule
\multicolumn{4}{l}{\textit{\name Attacks}} \\
\quad Gender + Horoscope (2-way) & 0.835 & 0.831 & 0.859 \\
\quad Gender + Age (2-way) & 0.764 & 0.758 & 0.792 \\
\quad Horoscope + Age (2-way) & 0.849 & 0.844 & 0.873 \\
\quad \textbf{All three tasks (3-way)} & \textbf{0.864} & \textbf{0.860} & \textbf{0.888} \\
\midrule
\rowcolor{green!10}
\textit{Impr. over Best Single} & \textit{+9.8\%} & \textit{+10.4\%} & \textit{+6.7\%} \\
\bottomrule
\end{tabular}
}
\end{table}

\noindent\textbf{Result and Analysis.} Table~\ref{tab:llm_results} presents and compares MIA performance.
The 3-way \name attack achieves AUC of 0.888, representing a \textbf{6.7\%} relative improvement over the best single-task attack (0.742). This demonstrates that the ODMM privacy composition holds for autoregressive language models beyond discriminative classifiers and regression.

% \mypara{Consistent Amplification Across Task Combinations.}
% All multi-task combinations outperform their constituent single-task attacks. Even the weakest 2 way combination (Gender + Age, AUC = 0.792) exceeds the best single-task attack (Horoscope, AUC = 0.832 is close but the combination still provides balanced improvement). The 3 way fusion consistently achieves the highest performance, confirming that additional tasks provide incremental privacy leakage.

\noindent$\bullet$\textit{Weak Signal Remain Valuable.}
The Gender task exhibits the weakest individual attack performance (AUC = 0.663), yet its inclusion in fusion attacks provides meaningful improvements. Gender + Horoscope (AUC = 0.859) outperforms Horoscope alone (AUC = 0.832), demonstrating that even tasks with weak signals contribute when combined.

% \mypara{Implications for LLM-as-a-Service.}
% These findings carry significant implications for LLM deployment scenarios. Organizations frequently fine-tune the same base model on proprietary data for multiple downstream applications (e.g., sentiment analysis, topic classification, named entity recognition). Our results demonstrate that exposing multiple fine-tuned endpoints or adapters, even through black-box API access, amplifies privacy risks.

\noindent$\bullet$\textit{LLM PEFT Remain Vulnerable.}
Despite using PEFT (i.e., LoRA), which adds only a small fraction of model parameters, the privacy amplification effect remains substantial. This suggests that memorization occurs at the representation level rather than being tied to specific parameter updates, and privacy-preserving techniques must account for multi-task exposure scenarios.

\begin{mdframed}[backgroundcolor=black!10,rightline=false,leftline=false,topline=false,bottomline=false,roundcorner=2mm]
\textbf{Takeaway 2:} In addition to \textbf{Takeaway 1}, first, the ODMM privacy composition is \textit{universal}: \name achieves 6.7\% to 16.6\% AUC improvement regardless of data modality (image vs. text), model family (CNN vs. Transformer vs. LLM), or training paradigm (full fine-tuning vs. Qwen3-1.7B with LoRA). Second, \textit{weak MIA tasks contribute meaningfully}: across all scenarios, tasks with near-random individual attack performance still provide non-redundant membership signals that amplify joint inference, validating the additive leakage accumulation in Theorem~\ref{thm:llr-decomp}. Third, \textit{Heterogeneity/divergence enhances amplification}: UTKFace (heterogeneous tasks) and HAM10000 (heterogeneous architectures) show the strongest gains, consistent with Theorem~\ref{thm:keff}. 
\end{mdframed}

% ======================================================
\subsection{Multi-Tasks vs. Single-Task-Models}
\label{sec:discussion_mtl}
% ======================================================

A natural question arises: \textit{Can multi-task learning (MTL) with a shared backbone mitigate the privacy amplification risk identified in this work?} Prior work by Yan \textit{et al.}~\cite{yan2023mtl} investigates MIA in MTL settings where multiple tasks share an identical backbone with task-specific heads. This architectural choice differs fundamentally from our setting, where each task is served by an independently trained model. 
% We conduct experiments to compare these two paradigms and \textit{reveal a critical privacy-utility trade-off}.

% \subsubsection{Experimental Setup}

\noindent\textbf{Experimental Setup.} We compare two model architectures on UTKFace. \Circled{1} \textit{Multi-Tasks Model}: A single ResNet-18 backbone shared across all three tasks (Age, Gender, Race), with task-specific linear heads attached to the shared feature extractor. This is what Yan \textit{et al.}~\cite{yan2023mtl} investigate. \Circled{2} \textit{Single-Task Models}: Three independently trained ResNet-18 models, each dedicated to a single task with full parameter independence. This is what \name investigates.

Both configurations are trained for 50 epochs using the Adam optimizer with a learning rate $10^{-3}$. 
For the multi-tasks model, we employ a joint training objective that sums the losses across all tasks. 
We evaluate both task performance and MIA vulnerability under identical data splits.

% \subsubsection{Utility Comparison}

\begin{table}[h]
\centering
\caption{Task utility comparison (UTKFace).}
\label{tab:mtl_performance}
\resizebox{0.95\linewidth}{!}{
\begin{tabular}{llccc}
\toprule
\textbf{Task} & \textbf{Metric} & \textbf{Multi-Tasks} & \textbf{Single-Task} & \textbf{$\Delta$} \\
\midrule
Age & MAE $\downarrow$ & 7.30 & 7.22 & -0.08 \\
Gender & Acc $\uparrow$ & 80.89\% & 87.51\% & -6.62\% \\
Race & Acc $\uparrow$ & 62.49\% & 75.40\% & -12.91\% \\
\bottomrule
\end{tabular}
}
\end{table}

% \subsubsection{Privacy Vulnerability Comparison}

\begin{table}[h]
\centering
\caption{MIA comparison.}
\label{tab:mtl_mia}
\resizebox{0.90\linewidth}{!}{
\begin{tabular}{llcc}
\toprule
\textbf{Attack Type} & \textbf{Metric} & \textbf{Multi-Tasks} & \textbf{Single-Task} \\
\midrule
\multirow{3}{*}{Single} 
& Age AUC & 0.778 & 0.789 \\
& Gender AUC & 0.494 & 0.688 \\
& Race AUC & 0.529 & 0.844 \\
\midrule
\multirow{4}{*}{\name}
& Accuracy & 0.730 & \textbf{0.921} \\
& F1 Score & 0.741 & \textbf{0.925} \\
& AUC & 0.799 & \textbf{0.960} \\
& Member Recall & 0.773 & \textbf{0.979} \\
\bottomrule
\end{tabular}
}
\end{table}

\noindent\textbf{Utility Comparison.} 
Table~\ref{tab:mtl_performance} presents the task-specific performance comparison between multi-tasks and single-task architectures. The results reveal a substantial \textit{utility gap} favoring single-task models:

\noindent\textit{Age Regression}: Comparable performance (MAE difference of only 0.08 years), suggesting that age-related features are relatively task-agnostic.

\noindent\textit{Gender Classification}: Single-task models achieve 6.62\% higher accuracy, indicating that gender-specific features benefit from dedicated optimization.

\noindent\textit{Race Classification}: The largest gap of 12.91\% demonstrates significant negative transfer, the shared backbone learns compromised representations when forced to simultaneously optimize for unaligned task objectives.

%Table~\ref{tab:mtl_performance} presents the task-specific utility performance comparison. The results reveal a substantial \textit{utility gap} favoring single-task models (detailed analysis is in \ref{app:multiheadCompare}). 
This performance degradation in \textit{multi-tasks} models is consistent with the well-documented \textit{negative transfer} phenomenon in multi-task learning~\cite{standley2020taskslearnedmultitasklearning, yu2020gradientsurgerymultitasklearning}, where gradient conflicts between tasks lead to suboptimal solutions for individual objectives. To gain high utility, a single task/head model must always be chosen in practice.

\noindent\textbf{Privacy Vulnerability Comparison.} 
%Table~\ref{tab:mtl_mia} presents the MIA performance, revealing that single-task models are substantially more vulnerable to privacy leakage. %For example, the \name attack against single-task models achieves an AUC of 0.960, compared to 0.799 for multi-task models, a striking \textbf{20.1\% absolute improvement} in attack effectiveness.    
Table~\ref{tab:mtl_mia} presents the MIA performance, revealing:

\noindent\textit{Single-task Models are More Vulnerable}: The \name attack against single-task models achieves an AUC of 0.960, compared to 0.799 for multi-tasks models, a striking \textit{20.1\% absolute improvement} in attack effectiveness.
    
\noindent\textit{Task-Specific Vulnerability Amplification}: Single-task attacks against single-task models consistently outperform those against multi-tasks models. The Race task shows the most dramatic difference (AUC: 0.844 vs. 0.529), corresponding to the task where single-task models achieve the largest performance advantage.
    
\noindent\textit{Privacy Composition is Stronger in Independent Models}: The \name over the best single-task attack is more pronounced for single-task models (0.960 vs. 0.844 = +13.7\%) compared to multi-tasks models (0.799 vs. 0.778 = +2.7\%). 
% This confirms that independently trained models develop more diverse and complementary memorization patterns.

\noindent\textbf{Why Single-Task Models are More Vulnerable?}
There are three potential factors:

\noindent\textit{Task-Specific Overfitting}: Each single-task model can fully overfit to its specific task objective without interference, leading to stronger memorization of training samples from the task's perspective.
    
\noindent\textit{Independent Optimization Trajectories}: Without shared parameters, single-task models explore \textit{different regions} of the loss landscape, resulting in complementary (rather than redundant) memorization patterns.

\noindent\textit{Divergent Learning Features}: Because a multi-tasks model shares the same backbone, the backbone is forced to learn representations that are simultaneously useful across all tasks. This constraint limits the learning of task-specific features and is therefore suboptimal from a utility perspective. In contrast, a single-task model can fully specialize in task-specific representations, achieving higher utility but learning features that are divergent across tasks. When such features are collectively exposed through ODMM models, they effectively \emph{expand the memorable feature space}, thereby amplifying privacy leakage.

% In summary, our findings reveal a fundamental \textit{privacy-utility trade-off} between the two architectural paradigms (reasons are explained in detail in Appendix~\ref{app:multiheadCompare}):

\begin{mdframed}[backgroundcolor=black!10,rightline=false,leftline=false,topline=false,bottomline=false,roundcorner=2mm]
\textbf{Key Finding:} There is a trade-off. A multi-task model with a shared backbone is less vulnerable to \textit{privacy protection} through representation coupling, but at the cost of \textit{significant utility (e.g., accuracy) degradation}. Single-task models achieve superior utility but create amplified privacy risks when jointly exposed.
\end{mdframed}

% \noindent\textbf{Higher Task Performance Correlates with Higher Memorization}: The superior task performance of single-head models (Table~\ref{tab:mtl_performance}) indicates stronger fitting to training data, which inherently increases privacy leakage.

% ======================================================
% \subsection{Ablation Study}
% \label{sec:ablation}
% ======================================================

% We conduct ablation studies to investigate key factors affecting the privacy amplification phenomenon. 

%, and the impact of training data overlap rates across tasks (Section~\ref{sec:ablation_overlap}).

% ======================================================
\subsection{Number of Exposed Models}
\label{sec:ablation_num_models}
% ======================================================

% A natural question arises: \textit{how does privacy risk scale with the number of multi-task models exposed?} We hypothesize that additional models provide incremental privacy leakage due to our Memory Complementarity principle, but the marginal gain may diminish as redundancy increases.

We conduct this study on CelebA using ResNet-18 models trained for 50 epochs. We progressively expand the set of ODMM tasks from 3 to 7, selecting attributes that represent diverse facial characteristics: \textbf{3 Models} of Male, Mouth\_Slightly\_Open, Big\_Nose, \textbf{4 Models} by + Black\_Hair, \textbf{5 Models} by + Smiling, \textbf{6 Models} by + Straight\_Hair.
%and \textbf{7 Models} by + Wearing\_Hat
Each configuration uses the same victim/shadow data splits to ensure fair comparison. The joint feature dimension scales linearly: $3\times k$ dimensions for $k$ models.

\begin{figure}[h]
    \centering
    \includegraphics[width=0.75\linewidth]{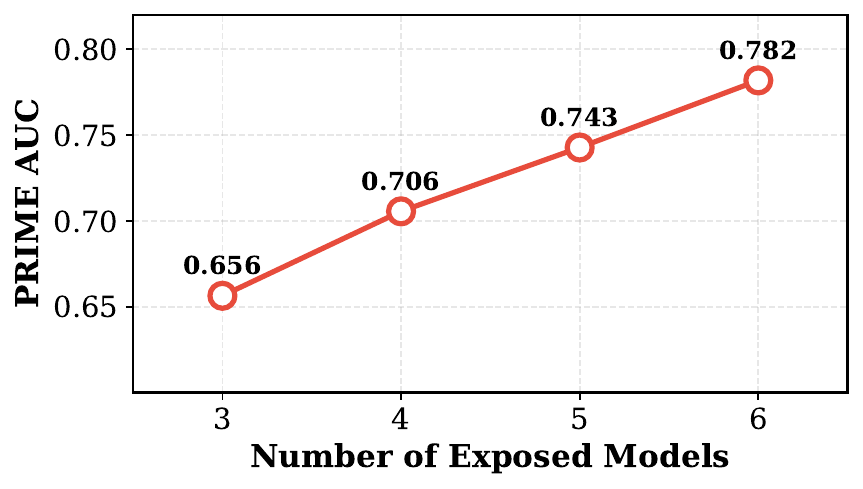}
    \caption{MIA AUC vs number of ODMM models.}
    \label{fig:num_models_ablation}
\end{figure}

% Figure~\ref{fig:num_models_ablation} and Table~\ref{tab:num_models_ablation} 
Figure~\ref{fig:num_models_ablation}  presents the \name performance as the number of exposed models increases. The \name AUC increases monotonically from 0.656 (3 models) to 0.782 (6 models), representing a \textbf{12.6\% improvement}.
The incremental AUC gain per model shows diminishing returns: +0.050 (3$\rightarrow$4), +0.037 (4$\rightarrow$5), +0.039 (5$\rightarrow$6).
This confirms that each additional task-specific model contributes unique membership signals, confirming Theorem~\ref{thm:llr-decomp}.

\section{Discussion}
\label{sec:discussion}
% ======================================================

We discuss three important aspects that adhere to our study: 
%(1) the comparison between multi-task learning with shared backbones and independent single-task models; 
(1) evaluating two potential defenses; (2) using the SOTA RMIA~\cite{ZarifzadehLS24} method to show that \name consistently has superior performance over single-task, regardless of its underlying MIA method.

\subsection{Differential Privacy as a Defense}
\label{sec:dp_defense}

% Differential Privacy (DP)~\cite{dwork2006calibrating} provides mathematically rigorous privacy guarantees by ensuring that the model's output distribution is approximately independent of any single training sample. 
We evaluate whether DP-SGD~\cite{abadi2016deep}, the standard approach for training DL models with DP~\cite{dwork2006calibrating}, can effectively mitigate \name attack. We conduct DP experiments on UTKFace using a compact CNN architecture compatible with Opacus~\cite{yousefpour2021opacus} using the DP-SGD framework. The detailed experimental setup is in Appendix~\ref{app:dpDefense}.

\begin{table}[h]
\centering
\caption{Model utility under DP (UTKFace). 
% $\varepsilon$ is the privacy budget consumed by 50 epochs of training.
}
\label{tab:dp_utility}
\resizebox{1.0\linewidth}{!}{
\begin{tabular}{lcccccc}
\toprule
\textbf{Configuration} & $\boldsymbol{\sigma}$ & $\boldsymbol{\varepsilon}$ & \textbf{Age MAE}$\downarrow$ & \textbf{Gender Acc}$\uparrow$ & \textbf{Race Acc}$\uparrow$ \\
\midrule
No DP & -- & $\infty$ & 7.47 & 89.16\% & 76.58\% \\
Moderate DP & 0.2 & 635.2 & 8.72 & 86.71\% & 71.98\% \\
Strong DP & 0.5 & 22.7 & 10.20 & 86.03\% & 67.00\% \\
\bottomrule
\end{tabular}
}
\end{table}

\noindent\textbf{Key Result.} Comprehensive result interpretations are in Appendix~\ref{app:dpDefense}, while we provide key results herein. Table~\ref{tab:dp_utility} presents the task performance under different DP configurations. As expected, adding noise during training degrades model utility across all tasks. Table~\ref{tab:dp_attack} shows the MIA performance under DP protection. The results demonstrate that DP-SGD provides substantial defense against both single-task and \name joint attacks. To quantify the defense effectiveness, we define the \textit{Protection Rate} as the fraction of excess attack AUC (above the random baseline of 0.5) that is eliminated:
\begin{equation}
\text{Protection Rate} = \frac{\text{AUC}_{\text{no-DP}} - \text{AUC}_{\text{DP}}}{\text{AUC}_{\text{no-DP}} - 0.5} \times 100\%.
\end{equation}
Table~\ref{tab:dp_tradeoff} summarizes the privacy-utility trade-off, where we use Race classification accuracy loss as a representative utility metric.

\begin{table}[t]
\centering
\caption{MIA performance under DP protection.}
\label{tab:dp_attack}
\resizebox{1.0\linewidth}{!}{
\begin{tabular}{lccccc}
\toprule
\textbf{Configuration} & \textbf{Age AUC} & \textbf{Gender AUC} & \textbf{Race AUC} & \textbf{Avg. Single} & \textbf{\name (Joint)} \\
\midrule
No DP & 0.740 & 0.567 & 0.717 & 0.674 & 0.879 \\
Moderate DP ($\sigma$=0.2) & 0.533 & 0.524 & 0.526 & 0.528 & 0.550 \\
Strong DP ($\sigma$=0.5) & 0.514 & 0.512 & 0.509 & 0.511 & 0.519 \\
\bottomrule
\end{tabular}
}
\end{table}

\begin{table}[t]
\centering
\caption{Privacy-utility trade-off for DP defense. 
% Protection Rate measures defense effectiveness against \name; Utility Loss denotes Race classification accuracy degradation.
}
\label{tab:dp_tradeoff}
\resizebox{1.0\linewidth}{!}{
\begin{tabular}{lccccc}
\toprule
\textbf{Configuration} & \textbf{\name AUC} & \textbf{AUC Drop} & \textbf{Protection Rate} & \textbf{Race Acc Loss} \\
\midrule
Moderate DP ($\sigma$=0.2) & 0.550 & 0.329 & 86.7\% & 4.60\% \\
Strong DP ($\sigma$=0.5) & 0.519 & 0.360 & 95.1\% & 9.58\% \\
\bottomrule
\end{tabular}
}
\end{table}

% In summary, DP as a defense suffers a notable privacy utility trade-off, preventing its practical usage as a mitigation.

\begin{mdframed}[backgroundcolor=black!10,rightline=false,leftline=false,topline=false,bottomline=false,roundcorner=2mm]
\textbf{Key Finding}: DP is an effective defense against \name but incurs a notable utility trade-off. Even under DP protection, \name consistently outperforms single-task MIAs, confirming the persistence of ODMM privacy composition.
\end{mdframed}

% ======================================================
\subsection{Limiting Information: Hard-Label}
\label{sec:hard_label}
% ======================================================

As another defense, model providers can restrict API outputs to hard labels (e.g., class predictions or scalar values) rather than full probability distributions~\cite{li2021membership}. We now evaluate \name that operates solely on hard labels.

\mypara{Methodology.}
We employ a perturbation-based approach~\cite{li2021membership, choquette2021label} to extract membership signals. The intuition is that member samples are generally more robust to input noise than non-members. We generate $N=30$ perturbed copies of a target image by adding Gaussian noise and query the ODMM models. For each task, we extract a 4-dimensional feature vector based on the model's discrete outputs: (1) prediction correctness, (2) consistency ratio (stability of prediction under noise), (3) correctness ratio under noise, and (4) prediction entropy across perturbations. For the regression task (Age), we discretize the scalar output into 10-year bins to calculate consistency metrics. More details are in Appendix~\ref{app:hard_label}.

\mypara{Results.}
We evaluate on UTKFace. Table~\ref{tab:hard_label_summary} summarizes the results, including the original victim model utility.

\begin{table}[h]
\centering
\caption{MIA performance under hard-label on UTKFace.}
\label{tab:hard_label_summary}
\resizebox{0.8\linewidth}{!}{
\begin{tabular}{lcccc}
\toprule
\textbf{Attack Configuration} & \textbf{Utility} & \textbf{MIA Acc} & \textbf{MIA F1} & \textbf{MIA AUC} \\
\midrule
Single-Age (Hard) & 5.93 (MAE) & 0.675 & 0.733 & 0.694 \\
Single-Gender (Hard) & 91.18\% & 0.558 & 0.661 & 0.565 \\
Single-Race (Hard) & 79.70\% & 0.596 & 0.708 & 0.602 \\
\midrule\rowcolor{green!10}
\textit{Impr. over Best Single} & -- & \textit{+10.1\%} 
& \textit{+5.5\%} & \textit{+11.7\%} \\
\bottomrule
\end{tabular}
}
\end{table}

The results show that ODMM privacy risks persist even without access to confidence scores. While single-task hard-label attacks are relatively weak (average AUC 0.620), the joint \name attack achieves an AUC of \textbf{0.775}, representing a \textbf{24.9\%} improvement.
Feature importance analysis reveals that Age (46.3\%) and Race (35.7\%) tasks contribute the most significant robustness signals, while Gender (18.0\%) contributes less.
% Therefore, limiting exposure information to hard-label cannot eliminate \name as the \name shows \textit{consistent substantial improved attack effectiveness} compared to single-task attack, though it can reduce the \textit{absolute} MIA accuracy.
% This confirms that the \textit{complementary robustness} of a sample across different tasks can be exploited to infer membership, rendering simple confidence-masking defenses insufficient for ODMM deployments.

% ======================================================

\begin{mdframed}[backgroundcolor=black!10,rightline=false,leftline=false,topline=false,bottomline=false,roundcorner=2mm]
\textbf{Key Finding:} The privacy amplification effect of \name is not specific to the MIA method. It always outperforms single-task attacks across both traditional shadow-model-based MIA and the SOTA RMIA regardless of accessible information (e.g., hard-label only).
% , confirming that the fundamental vulnerability lies in the \textit{complementary memorization} induced by ODMM training.
\end{mdframed}

\subsection{Extend RMIA to \name}
%\subsection{Robust Membership Inference Attack}
\label{sec:rmia_validation}
% ======================================================

Our experiments have employed the classical MIA paradigm, where a binary attack model is trained (using shadow models) to distinguish members from non-members. A natural question arises: \textit{Does the privacy amplification effect of \name generalize to SOTA MIA methods?} To address this, we evaluate \name using RMIA~\cite{ZarifzadehLS24}, an SOTA attack that achieves superior performance, particularly in the low false-positive rate (FPR) regime critical for practical privacy auditing.

%\subsubsection{RMIA Experiment Details}
\label{app:rmia_details}

\subsubsection{RMIA Methodology}

RMIA~\cite{ZarifzadehLS24} employs pairwise likelihood ratio testing instead of training attack classifiers. For a target sample $x$ and target model $\theta$, RMIA computes:
\begin{equation}
\text{Score}_{\text{RMIA}}(x) = \Pr_{z \sim \pi}\left[\frac{\Pr(x|\theta)/\Pr(x)}{\Pr(z|\theta)/\Pr(z)} \geq \gamma\right],
\end{equation}
where $\Pr(x|\theta)$ is the prediction score (softmax) of the model's output for the true-label class, $\Pr(x)$ is the baseline prediction score estimated from reference models, and $\gamma$ is a threshold parameter. The $\text{Score}_{\text{RMIA}}$ represents the fraction of population samples $z$ that $x$ dominates in terms of likelihood ratio.

Following~\cite{ZarifzadehLS24}, we employ the offline mode with linear correction:
\begin{equation}
\Pr(x) \approx \frac{1}{2}\left[(1 + a) \cdot \Pr_{\text{OUT}}(x) + (1 - a)\right],
\end{equation}
where $\Pr_{\text{OUT}}(x)$ is the average confidence across OUT reference models and $a$ is a correction parameter.

\subsubsection{Joint RMIA for ODMM}

To extend RMIA to \name and fit the ODMM setting, we joint likelihood ratios across task-specific models using geometric mean:
\begin{equation}
\text{Ratio}_{\text{joint}}(x) = \left(\prod_{k=1}^{K} \frac{\Pr(x|\theta_k)}{\Pr(x)}\right)^{1/K}.
\end{equation}
This design preserves the multiplicative nature of likelihood ratios while normalizing across tasks with different ratio magnitudes. We then compute the final pairwise joint score by comparing the joint ratio against population samples. Moreover, we employ binary search on pre-sorted population ratios for efficient computation, reducing per-sample complexity from $O(|Z|)$ to $O(\log|Z|)$.

\subsubsection{Experimental Configuration}

\mypara{Dataset and Models.}
We use CelebA with three binary classification tasks (Male, Mouth\_Slightly\_Open, Big\_Nose). All models are ResNet-18 trained for 50 epochs with Adam optimizer (learning rate $10^{-3}$, batch size 64), using the same data splits as Section~\ref{sec:eval_face} with $N$=5 shadow models per task.

\mypara{Parameters.}
Following~\cite{ZarifzadehLS24}, we set the correction parameter $a = 0.5$, the threshold $\gamma = 2.0$, and the population size $|Z| \approx 40,000$.

\mypara{Confidence Computation for Binary Classification.}
For a sample $(x, y)$ with $y \in \{0, 1\}$:
\begin{equation}
\Pr(x|\theta) = 
\begin{cases}
\sigma(f_\theta(x)), & \text{if } y = 1 \\
1 - \sigma(f_\theta(x)), & \text{if } y = 0
\end{cases}
\end{equation}
where $\sigma(\cdot)$ is the sigmoid function.

% Detailed methodology and experimental setup are provided in Appendix~\ref{app:rmia_details}.

\subsubsection{Results}

Table~\ref{tab:rmia_detailed} presents complete single-task results. RMIA achieves a higher AUC than traditional MIA across all tasks (+4.0\% average), with a particularly pronounced advantage in the low-FPR regime: at TPR@1\%FPR, RMIA reaches 4.4\% on average versus only 1.4\% for traditional MIA. The results yield several important observations:

\begin{table}[h]
\centering
\caption{Single-task MIA performance on CelebA.}
\label{tab:rmia_detailed}
\resizebox{\linewidth}{!}{
\begin{tabular}{llccccccc}
\toprule
\multirow{2}{*}{\textbf{Method}} &
\multirow{2}{*}{\textbf{Task}} &
\multirow{2}{*}{\textbf{AUC}} &
\multirow{2}{*}{\textbf{Accuracy}} &
\multirow{2}{*}{\textbf{Precision}} &
\multirow{2}{*}{\textbf{Recall}} &
\multirow{2}{*}{\textbf{F1}} &
\multicolumn{2}{c}{\textbf{TPR@FPR}} \\
\cmidrule(lr){8-9}
&&&&&&&
\textbf{1\%} &
\textbf{0.1\%} \\
\midrule

\multirow{4}{*}{Traditional}
& Male         & 0.521 & 0.517 & 0.511 & 0.750 & 0.608 & 0.010 & 0.001 \\
& Mouth\_Open  & 0.544 & 0.542 & 0.525 & 0.905 & 0.664 & 0.012 & 0.001 \\
& Big\_Nose    & 0.649 & 0.617 & 0.574 & 0.904 & 0.702 & 0.020 & 0.002 \\
& \textit{Average}
               & \textit{0.571} & \textit{0.544} & \textit{0.541}
               & \textit{0.585} & \textit{0.562}
               & \textit{0.014} & \textit{0.001} \\
\midrule

\multirow{4}{*}{RMIA}
& Male         & 0.533 & 0.517 & 0.547 & 0.193 & 0.286 & 0.024 & 0.005 \\
& Mouth\_Open  & 0.579 & 0.541 & 0.587 & 0.276 & 0.376 & 0.047 & 0.015 \\
& Big\_Nose    & 0.670 & 0.604 & 0.590 & 0.683 & 0.633 & 0.060 & 0.018 \\
& \textit{Average}
               & \textit{0.594} & \textit{0.554} & \textit{0.575}
               & \textit{0.384} & \textit{0.432}
               & \textit{0.044} & \textit{0.013} \\
\bottomrule
\end{tabular}
}
\end{table}

% \begin{table}[h]
% \centering
% \caption{Single-task MIA performance on CelebA.}
% \label{tab:rmia_detailed}
% \resizebox{\linewidth}{!}{
% \begin{tabular}{llccccccc}
% \toprule
% \textbf{Method} & \textbf{Task} & \textbf{AUC} & \textbf{Accuracy} & \textbf{Precision} & \textbf{Recall} & \textbf{F1} & \textbf{TPR@1\%FPR} & \textbf{TPR@0.1\%FPR} \\
% \midrule
% \multirow{4}{*}{Traditional}
% & Male & 0.521 & 0.517 & 0.511 & 0.750 & 0.608 & 0.010 & 0.001 \\
% & Mouth\_Open & 0.544 & 0.542 & 0.525 & 0.905 & 0.664 & 0.012 & 0.001 \\
% & Big\_Nose & 0.649 & 0.617 & 0.574 & 0.904 & 0.702 & 0.020 & 0.002 \\
% & \textit{Average} & \textit{0.571} & \textit{0.544} & \textit{0.541} & \textit{0.585} & \textit{0.562} & \textit{0.014} & \textit{0.001} \\
% \midrule
% \multirow{4}{*}{RMIA}
% & Male & 0.533 & 0.517 & 0.547 & 0.193 & 0.286 & 0.024 & 0.005 \\
% & Mouth\_Open & 0.579 & 0.541 & 0.587 & 0.276 & 0.376 & 0.047 & 0.015 \\
% & Big\_Nose & 0.670 & 0.604 & 0.590 & 0.683 & 0.633 & 0.060 & 0.018 \\
% & \textit{Average} & \textit{0.594} & \textit{0.554} & \textit{0.575} & \textit{0.384} & \textit{0.432} & \textit{0.044} & \textit{0.013} \\
% \bottomrule
% \end{tabular}
% }
% \end{table}

Table~\ref{tab:rmia_summary} presents a comparison of attack performance between single-task and ODMM (via \name), under both traditional MIA and RMIA paradigms on CelebA.

\begin{table}[h]
\centering
\caption{Comparison of \name effectiveness across MIA paradigms on CelebA (RMIA method).}
\label{tab:rmia_summary}
\resizebox{\linewidth}{!}{
\begin{tabular}{llcccccc}
\toprule
\multirow{2}{*}{\textbf{MIA Method}} &
\multirow{2}{*}{\textbf{Attack Type}} &
\multirow{2}{*}{\textbf{AUC}} &
\multirow{2}{*}{\textbf{Accuracy}} &
\multirow{2}{*}{\textbf{F1}} &
\multicolumn{2}{c}{\textbf{TPR@FPR}} \\
\cmidrule(l){6-7}
&&&&&
\textbf{1\%} &
\textbf{0.1\%} \\
\midrule

\multirow{3}{*}{Traditional}
& Avg. Single-Task
    & 0.571 & 0.544 & 0.562 & 0.014 & 0.001 \\
& \textbf{\name (Joint)}
    & \textbf{0.732} & \textbf{0.670} & \textbf{0.721} & \textbf{0.047} & \textbf{0.008} \\
& \textit{Improvement}
    & \textit{+16.1\%} & \textit{+12.6\%} & \textit{+15.9\%} & \textit{+3.3\%} & \textit{+0.7\%} \\
\midrule

\multirow{3}{*}{RMIA}
& Avg. Single-Task
    & 0.594 & 0.554 & 0.432 & 0.044 & 0.013 \\
& \textbf{\name (Joint)}
    & \textbf{0.722} & \textbf{0.626} & \textbf{0.727} & \textbf{0.087} & \textbf{0.017} \\
& \textit{Improvement}
    & \textit{+12.7\%} & \textit{+9.1\%} & \textit{+29.5\%} & \textit{+4.3\%} & \textit{+0.4\%} \\
\bottomrule
\end{tabular}
}
\end{table}

% \begin{table}[h]
% \centering
% \caption{Comparison of \name effectiveness across MIA paradigms on CelebA (RMIA method).}
% \label{tab:rmia_summary}
% \resizebox{\linewidth}{!}{
% \begin{tabular}{llccccc}
% \toprule
% \textbf{MIA Method} & \textbf{Attack Type} & \textbf{AUC} & \textbf{Accuracy} & \textbf{F1} & \textbf{TPR@1\%FPR} & \textbf{TPR@0.1\%FPR} \\
% \midrule
% \multirow{3}{*}{Traditional} 
% & Avg. Single-Task & 0.571 & 0.544 & 0.562 & 0.014 & 0.001 \\
% & \textbf{\name (Joint)} & \textbf{0.732} & \textbf{0.670} & \textbf{0.721} & \textbf{0.047} & \textbf{0.008} \\
% & \textit{Improvement} & \textit{+16.1\%} & \textit{+12.6\%} & \textit{+15.9\%} & \textit{+3.3\%} & \textit{+0.7\%} \\
% \midrule
% \multirow{3}{*}{RMIA} 
% & Avg. Single-Task & 0.594 & 0.554 & 0.432 & 0.044 & 0.013 \\
% & \textbf{\name (Joint)} & \textbf{0.722} & \textbf{0.626} & \textbf{0.727} & \textbf{0.087} & \textbf{0.017} \\
% & \textit{Improvement} & \textit{+12.7\%} & \textit{+9.1\%} & \textit{+29.5\%} & \textit{+4.3\%} & \textit{+0.4\%} \\
% \bottomrule
% \end{tabular}
% }
% \end{table}

\noindent$\bullet$\textbf{\name Extends Effectively to RMIA}. The \name under the RMIA paradigm achieves an AUC of 0.722, substantially outperforming the average single-task RMIA (AUC=0.594) by \textbf{12.7\%}. It also attains \textbf{4.3\% } higher TPR @ 1\% FPR compared to single-task RMIA, demonstrating \name's privacy amplification effect in the critical TPR@Low FPR regime.

\noindent$\bullet$\textbf{RMIA Strengthens Low-FPR Detection.} In the low-FPR regime, single-task RMIA detects 4.4\% of members at 1\% FPR (versus only 1.4\% for traditional single-task MIA), and \name further amplifies this to \textbf{8.7\%}. Notably, although \name's overall AUC under RMIA (0.722) is comparable to that under the traditional paradigm (0.732), its TPR@1\%FPR (8.7\%) nearly doubles that of the traditional joint attack (4.7\%), consistent with RMIA's design objective of maximizing detection at low false-positive rates rather than overall AUC.

\noindent$\bullet$\textbf{Consistent Privacy Amplification Across Paradigms.} The AUC performance gain from \name remains substantial under both traditional MIA (+16.1\%) and RMIA (+12.7\%) paradigms, reinforcing evidence ODMM privacy composition.

%======================================================
\section{Conclusion}
\label{sec:con}

For the first time, this work theoretically and empirically reveals a striking form of privacy leakage rooted in ODMM privacy composition, a practice increasingly adopted to maximize the value of high-quality data. Grounded in our established theoretical foundation, we propose \name, a principled framework for evaluating privacy risks in the ODMM paradigm. Comprehensive experiments across diverse privacy-sensitive datasets, multiple model architectures, and a wide range of application scenarios consistently validate our root-cause hypothesis of \emph{ODMM privacy composition}. We hope this work raises awareness of this potent and previously unrecognized risk, and encourages the community to exercise greater caution when leveraging valuable data across multiple learning objectives.
% ======================================================

%This work presents the first in-depth privacy evaluation framework \name for three widely used and commercially supported compression operations---pruning, quantization, and weight clustering---through the lens of membership inference. Specifically, \nameappend$_{\rm SR}$ reveals that these compression operations indeed increase privacy leakage. Notably, this leakage is further exacerbated when leveraging information from multiple compressed models.  Building on this, we propose \nameappend$_{\rm MR}$ that stacks the loss of multiple compressed models and the meta-posterior from the \nameappend$_{\rm SR}$ attack meta-classifier. Extensive experiments have validated the \name superior performance under diverse model architectures, datasets, and various settings.

%\section*{Acknowledgments}
%-------------------------------------------------------------------------------

%\textbf{Do not include any acknowledgements in your submission which may deanonymize you (e.g., because of specific affiliations or grants you acknowledge)}

%-------------------------------------------------------------------------------
% optional clearing of the page
\cleardoublepage
%\appendix
% optional clearing of the page
% \cleardoublepage

% optional clearing of the page
% \cleardoublepage
%\section*{Acknowledgment}

\bibliographystyle{unsrt}
\bibliography{normal_generated_py3}

@misc{chatgpt,
  howpublished = {\url{https://openai.com/chatgpt}}
}

@misc{midjourney,
  howpublished = {\url{https://www.midjourney.com}}
}

@misc{detectron2,
  howpublished = {\url{https://github.com/facebookresearch/detectron2/blob/main/MODEL_ZOO.md}}
}

@inproceedings{hu2023planning,
  title={Planning-oriented autonomous driving},
  author={Hu, Yihan and Yang, Jiazhi and Chen, Li and Li, Keyu and Sima, Chonghao and Zhu, Xizhou and Chai, Siqi and Du, Senyao and Lin, Tianwei and Wang, Wenhai and others},
  booktitle={Proceedings of the IEEE/CVF Conference on Computer Vision and Pattern Recognition},
  pages={17853--17862},
  year={2023}
}

@inproceedings{deng2019arcface,
  title={Arcface: Additive angular margin loss for deep face recognition},
  author={Deng, Jiankang and Guo, Jia and Xue, Niannan and Zafeiriou, Stefanos},
  booktitle={Proceedings of the IEEE/CVF Conference on Computer Vision and Pattern Recognition},
  pages={4690--4699},
  year={2019}
}

@inproceedings{ma2024watch,
  title={Watch out! simple horizontal class backdoor can trivially evade defense},
  author={Ma, Hua and Wang, Shang and Gao, Yansong and Zhang, Zhi and Qiu, Huming and Xue, Minhui and Abuadbba, Alsharif and Fu, Anmin and Nepal, Surya and Abbott, Derek},
  booktitle={Proceedings of the 2024 on ACM SIGSAC Conference on Computer and Communications Security},
  pages={4465--4479},
  year={2024}
}

@inproceedings{li2024yes,
  title={Yes,$\{$One-Bit-Flip$\}$ Matters! Universal $\{$DNN$\}$ Model Inference Depletion with Runtime Code Fault Injection},
  author={Li, Shaofeng and Wang, Xinyu and Xue, Minhui and Zhu, Haojin and Zhang, Zhi and Gao, Yansong and Wu, Wen and Shen, Xuemin Sherman},
  booktitle={33rd USENIX Security Symposium (USENIX Security 24)},
  pages={1315--1330},
  year={2024}
}

@article{wang2024greats,
  title={Greats: Online selection of high-quality data for llm training in every iteration},
  author={Wang, Jiachen T and Wu, Tong and Song, Dawn and Mittal, Prateek and Jia, Ruoxi},
  journal={Advances in Neural Information Processing Systems},
  volume={37},
  pages={131197--131223},
  year={2024}
}

@inproceedings{li2024quantity,
  title={From quantity to quality: Boosting llm performance with self-guided data selection for instruction tuning},
  author={Li, Ming and Zhang, Yong and Li, Zhitao and Chen, Jiuhai and Chen, Lichang and Cheng, Ning and Wang, Jianzong and Zhou, Tianyi and Xiao, Jing},
  booktitle={Proceedings of the 2024 Conference of the North American Chapter of the Association for Computational Linguistics: Human Language Technologies (Volume 1: Long Papers)},
  pages={7602--7635},
  year={2024}
}

@article{litjens2017survey,
  title={A survey on deep learning in medical image analysis},
  author={Litjens, Geert and Kooi, Thijs and Bejnordi, Babak Ehteshami and Setio, Arnaud Arindra Adiyoso and Ciompi, Francesco and Ghafoorian, Mohsen and Van Der Laak, Jeroen Awm and Van Ginneken, Bram and S{\'a}nchez, Clara I},
  journal={Medical Image Analysis},
  volume={42},
  pages={60--88},
  year={2017}
}

@article{vandenhende2021multi,
  title={Multi-task learning for dense prediction tasks: A survey},
  author={Vandenhende, Simon and Georgoulis, Stamatios and Van Gansbeke, Wouter and Proesmans, Marc and Dai, Dengxin and Van Gool, Luc},
  journal={IEEE Transactions on Pattern Analysis and Machine Intelligence},
  volume={44},
  number={7},
  pages={3614--3633},
  year={2021},
  publisher={IEEE}
}

@inproceedings{zamir2018taskonomy,
  title={Taskonomy: Disentangling task transfer learning},
  author={Zamir, Amir R and Sax, Alexander and Shen, William and Guibas, Leonidas J and Malik, Jitendra and Savarese, Silvio},
  booktitle={Proceedings of the IEEE Conference on Computer Vision and Pattern Recognition},
  pages={3712--3722},
  year={2018}
}

@inproceedings{ye2022inverted,
  title={Inverted pyramid multi-task transformer for dense scene understanding},
  author={Ye, Hanrong and Xu, Dan},
  booktitle={European Conference on Computer Vision},
  pages={514--530},
  year={2022},
  organization={Springer}
}

@inproceedings{li2022auditing,
  title={Auditing membership leakages of multi-exit networks},
  author={Li, Zheng and Liu, Yiyong and He, Xinlei and Yu, Ning and Backes, Michael and Zhang, Yang},
  booktitle={Proceedings of the 2022 ACM SIGSAC Conference on Computer and Communications Security},
  pages={1917--1931},
  year={2022}
}

@inproceedings{baluta2022membership,
  title={Membership inference attacks and generalization: A causal perspective},
  author={Baluta, Teodora and Shen, Shiqi and Hitarth, S and Tople, Shruti and Saxena, Prateek},
  booktitle={Proceedings of the 2022 ACM SIGSAC Conference on Computer and Communications Security},
  pages={249--262},
  year={2022}
}

@inproceedings{ye2022enhanced,
  title={Enhanced membership inference attacks against machine learning models},
  author={Ye, Jiayuan and Maddi, Aadyaa and Murakonda, Sasi Kumar and Bindschaedler, Vincent and Shokri, Reza},
  booktitle={Proceedings of the 2022 ACM SIGSAC Conference on Computer and Communications Security},
  pages={3093--3106},
  year={2022}
}

@inproceedings{li2021membership,
  title={Membership inference attacks and defenses in classification models},
  author={Li, Jiacheng and Li, Ninghui and Ribeiro, Bruno},
  booktitle={Proceedings of the Eleventh ACM Conference on Data and Application Security and Privacy},
  pages={5--16},
  year={2021}
}

@inproceedings{shokri2017membership,
  title={Membership inference attacks against machine learning models},
  author={Shokri, Reza and Stronati, Marco and Song, Congzheng and Shmatikov, Vitaly},
  booktitle={IEEE Symposium on Security and Privacy (S\&P)},
  pages={3--18},
  year={2017},
  organization={IEEE}
}

@inproceedings{carlini2022membership,
  title={Membership inference attacks from first principles},
  author={Carlini, Nicholas and Chien, Steve and Nasr, Milad and Song, Shuang and Terzis, Andreas and Tramer, Florian},
  booktitle={IEEE Symposium on Security and Privacy (S\&P)},
  pages={1897--1914},
  year={2022},
  organization={IEEE}
}

@inproceedings{li2024seqmia,
  title={Seqmia: Sequential-metric based membership inference attack},
  author={Li, Hao and Li, Zheng and Wu, Siyuan and Hu, Chengrui and Ye, Yutong and Zhang, Min and Feng, Dengguo and Zhang, Yang},
  booktitle={Proceedings of ACM Conference on Computer and Communications Security},
  pages={3496--3510},
  year={2024}
}

@inproceedings{he2024difficulty,
  title={Is Difficulty Calibration All We Need? Towards More Practical Membership Inference Attacks},
  author={He, Yu and Li, Boheng and Wang, Yao and Yang, Mengda and Wang, Juan and Hu, Hongxin and Zhao, Xingyu},
  booktitle={Proceedings of the 2024 on ACM SIGSAC Conference on Computer and Communications Security},
  pages={1226--1240},
  year={2024}
}

@inproceedings{chen2020gan,
  title={Gan-leaks: A taxonomy of membership inference attacks against generative models},
  author={Chen, Dingfan and Yu, Ning and Zhang, Yang and Fritz, Mario},
  booktitle={Proceedings of the 2020 ACM SIGSAC Conference on Computer and Communications Security},
  pages={343--362},
  year={2020}
}

@inproceedings{yeom2018privacy,
  title={Privacy risk in machine learning: Analyzing the connection to overfitting},
  author={Yeom, Samuel and Giacomelli, Irene and Fredrikson, Matt and Jha, Somesh},
  booktitle={2018 IEEE 31st Computer Security Foundations Symposium (CSF)},
  pages={268--282},
  year={2018},
  organization={IEEE}
}

@misc{gdpr,
  howpublished = {\url{https://gdpr-info.eu/.}},
}

@inproceedings{chen2024slmia,
  author       = {Guangke Chen and
                  Yedi Zhang and
                  Fu Song},
  title        = {{SLMIA-SR:} Speaker-Level Membership Inference Attacks against Speaker
                  Recognition Systems},
  booktitle    = {31st Annual Network and Distributed System Security Symposium, {NDSS}
                  2024, San Diego, California, USA, February 26 - March 1, 2024},
  publisher    = {The Internet Society},
  year         = {2024},
}

@inproceedings{meeus2024did,
  title={Did the neurons read your book? document-level membership inference for large language models},
  author={Meeus, Matthieu and Jain, Shubham and Rei, Marek and de Montjoye, Yves-Alexandre},
  booktitle={33rd USENIX Security Symposium (USENIX Security 24)},
  pages={2369--2385},
  year={2024}
}

@inproceedings{ZarifzadehLS24,
  author       = {Sajjad Zarifzadeh and
                  Philippe Liu and
                  Reza Shokri},
  title        = {Low-Cost High-Power Membership Inference Attacks},
  booktitle    = {Forty-first International Conference on Machine Learning, {ICML} 2024,
                  Vienna, Austria, July 21-27, 2024},
  year         = {2024}
}

@inproceedings{HuiYYBGC21,
  author       = {Bo Hui and
                  Yuchen Yang and
                  Haolin Yuan and
                  Philippe Burlina and
                  Neil Zhenqiang Gong and
                  Yinzhi Cao},
  title        = {Practical Blind Membership Inference Attack via Differential Comparisons},
  booktitle    = {28th Annual Network and Distributed System Security Symposium, {NDSS}
                  2021, virtually, February 21-25, 2021},
  year         = {2021}
}

@inproceedings{yuan2022membership,
  title={Membership inference attacks and defenses in neural network pruning},
  author={Yuan, Xiaoyong and Zhang, Lan},
  booktitle={31st USENIX Security Symposium (USENIX Security 22)},
  pages={4561--4578},
  year={2022}
}

@inproceedings{du2026cascading,
  title={Cascading and Proxy Membership Inference Attacks},
  author={Du, Yuntao and Li, Jiacheng and Chen, Yuetian and Zhang, Kaiyuan and Yuan, Zhizhen and Xiao, Hanshen and Ribeiro, Bruno and Li, Ninghui},
  booktitle={33nd Annual Network and Distributed System Security Symposium, {NDSS} 2026},
  year={2026}
}

@inproceedings{Salem0HBF019,
  author       = {Ahmed Salem and
                  Yang Zhang and
                  Mathias Humbert and
                  Pascal Berrang and
                  Mario Fritz and
                  Michael Backes},
  title        = {ML-Leaks: Model and Data Independent Membership Inference Attacks
                  and Defenses on Machine Learning Models},
  booktitle    = {26th Annual Network and Distributed System Security Symposium, {NDSS}
                  2019, San Diego, California, USA, February 24-27, 2019},
  year         = {2019}
}

@inproceedings{song2021systematic,
  title={Systematic evaluation of privacy risks of machine learning models},
  author={Song, Liwei and Mittal, Prateek},
  booktitle={30th USENIX Security Symposium (USENIX Security 21)},
  pages={2615--2632},
  year={2021}
}

@inproceedings{NasehPSCOH25,
  author       = {Ali Naseh and
                  Yuefeng Peng and
                  Anshuman Suri and
                  Harsh Chaudhari and
                  Alina Oprea and
                  Amir Houmansadr},
  title        = {Riddle Me This! Stealthy Membership Inference for Retrieval-Augmented
                  Generation},
  booktitle    = {Proceedings of the 2025 {ACM} {SIGSAC} Conference on Computer and
                  Communications Security, {CCS} 2025, Taipei, Taiwan, October 13-17,
                  2025},
  pages        = {1245--1259},
  publisher    = {{ACM}},
  year         = {2025}
}

@inproceedings{GaoMD0G25,
  author       = {Xinyu Gao and
                  Xiangtao Meng and
                  Yingkai Dong and
                  Zheng Li and
                  Shanqing Guo},
  title        = {\emph{DCMI: } {A} Differential Calibration Membership Inference Attack
                  Against Retrieval-Augmented Generation},
  booktitle    = {Proceedings of the 2025 {ACM} {SIGSAC} Conference on Computer and
                  Communications Security, {CCS} 2025, Taipei, Taiwan, October 13-17,
                  2025},
  pages        = {4184--4198},
  publisher    = {{ACM}},
  year         = {2025}
}

@inproceedings{WangZCBKY25,
  author       = {Zhiqi Wang and
                  Chengyu Zhang and
                  Yuetian Chen and
                  Nathalie Baracaldo and
                  Swanand Ravindra Kadhe and
                  Lei Yu},
  title        = {Membership Inference Attacks as Privacy Tools: Reliability, Disparity
                  and Ensemble},
  booktitle    = {Proceedings of {ACM} Conference on Computer and
                  Communications Security ({CCS}), Taipei, Taiwan, October 13-17,
                  2025},
  pages        = {1724--1738},
  publisher    = {{ACM}},
  year         = {2025}
}

@inproceedings{zhang25soft,
  author       = {Kaiyuan Zhang and
                  Siyuan Cheng and
                  Hanxi Guo and
                  Yuetian Chen and
                  Zian Su and
                  Shengwei An and
                  Yuntao Du and
                  Charles Fleming and
                  Ashish Kundu and
                  Xiangyu Zhang and
                  Ninghui Li},
  title        = {{SOFT:} Selective Data Obfuscation for Protecting {LLM} Fine-tuning
                  against Membership Inference Attacks},
  booktitle    = {34th {USENIX} Security Symposium, {USENIX} Security 2025, Seattle,
                  WA, USA, August 13-15, 2025},
  pages        = {8135--8154},
  publisher    = {{USENIX} Association},
  year         = {2025}
}

@inproceedings{li25enhanced,
  author       = {Hao Li and
                  Zheng Li and
                  Siyuan Wu and
                  Yutong Ye and
                  Min Zhang and
                  Dengguo Feng and
                  Yang Zhang},
  title        = {Enhanced Label-Only Membership Inference Attacks with Fewer Queries},
  booktitle    = {34th {USENIX} Security Symposium, {USENIX} Security 2025, Seattle,
                  WA, USA, August 13-15, 2025},
  pages        = {5465--5483},
  publisher    = {{USENIX} Association},
  year         = {2025}
}

@inproceedings{he2025llms,
  author       = {Yu He and
                  Boheng Li and
                  Liu Liu and
                  Zhongjie Ba and
                  Wei Dong and
                  Yiming Li and
                  Zhan Qin and
                  Kui Ren and
                  Chun Chen},
  title        = {Towards Label-Only Membership Inference Attack against Pre-trained
                  Large Language Models},
  booktitle    = {34th {USENIX} Security Symposium, {USENIX} Security 2025, Seattle,
                  WA, USA, August 13-15, 2025},
  pages        = {1609--1628},
  publisher    = {{USENIX} Association},
  year         = {2025}
}

@inproceedings{hu2025vlms,
  author       = {Yuke Hu and
                  Zheng Li and
                  Zhihao Liu and
                  Yang Zhang and
                  Zhan Qin and
                  Kui Ren and
                  Chun Chen},
  title        = {Membership Inference Attacks Against Vision-Language Models},
  booktitle    = {34th {USENIX} Security Symposium, {USENIX} Security 2025, Seattle,
                  WA, USA, August 13-15, 2025},
  pages        = {1589--1608},
  publisher    = {{USENIX} Association},
  year         = {2025}
}

@inproceedings{chen2025method,
  author       = {Zitao Chen and
                  Karthik Pattabiraman},
  title        = {A Method to Facilitate Membership Inference Attacks in Deep Learning
                  Models},
  booktitle    = {32nd Annual Network and Distributed System Security Symposium, {NDSS}
                  2025, San Diego, California, USA, February 24-28, 2025},
  publisher    = {The Internet Society},
  year         = {2025}
}

@inproceedings{pang2025black,
  author       = {Yan Pang and
                  Tianhao Wang},
  title        = {Black-box Membership Inference Attacks against Fine-tuned Diffusion
                  Models},
  booktitle    = {32nd Annual Network and Distributed System Security Symposium, {NDSS}
                  2025, San Diego, California, USA, February 24-28, 2025},
  publisher    = {The Internet Society},
  year         = {2025},
}

@inproceedings{Peng2025diffence,
  author       = {Yuefeng Peng and
                  Ali Naseh and
                  Amir Houmansadr},
  title        = {Diffence: Fencing Membership Privacy With Diffusion Models},
  booktitle    = {32nd Annual Network and Distributed System Security Symposium, {NDSS}
                  2025, San Diego, California, USA, February 24-28, 2025},
  publisher    = {The Internet Society},
  year         = {2025}
}

@inproceedings{shang2025defend,
  author       = {Jing Shang and
                  Jian Wang and
                  Kailun Wang and
                  Jiqiang Liu and
                  Nan Jiang and
                  Md. Armanuzzaman and
                  Ziming Zhao},
  title        = {Defending Against Membership Inference Attacks on Iteratively Pruned
                  Deep Neural Networks},
  booktitle    = {32nd Annual Network and Distributed System Security Symposium, {NDSS}
                  2025, San Diego, California, USA, February 24-28, 2025},
  publisher    = {The Internet Society},
  year         = {2025},
}

@inproceedings{Wang2025rigging,
  author       = {Zihao Wang and
                  Rui Zhu and
                  Zhikun Zhang and
                  Haixu Tang and
                  XiaoFeng Wang},
  title        = {Rigging the Foundation: Manipulating Pre-training for Advanced Membership
                  Inference Attacks},
  booktitle    = {{IEEE} Symposium on Security and Privacy, {S\&P} 2025, San Francisco,
                  CA, USA, May 12-15, 2025},
  pages        = {2509--2526},
  publisher    = {{IEEE}},
  year         = {2025}
}

@inproceedings{wen2024membership,
  title={Membership inference attacks against in-context learning},
  author={Wen, Rui and Li, Zheng and Backes, Michael and Zhang, Yang},
  booktitle={Proceedings of the 2024 on ACM SIGSAC Conference on Computer and Communications Security},
  pages={3481--3495},
  year={2024}
}

@inproceedings{liu2024please,
  title={Please tell me more: Privacy impact of explainability through the lens of membership inference attack},
  author={Liu, Han and Wu, Yuhao and Yu, Zhiyuan and Zhang, Ning},
  booktitle={2024 IEEE Symposium on Security and Privacy (S\&P)},
  pages={4791--4809},
  year={2024},
  organization={IEEE}
}

@inproceedings{chen2021machine,
  title={When machine unlearning jeopardizes privacy},
  author={Chen, Min and Zhang, Zhikun and Wang, Tianhao and Backes, Michael and Humbert, Mathias and Zhang, Yang},
  booktitle={Proceedings of the 2021 ACM SIGSAC Conference on Computer and Communications Security},
  pages={896--911},
  year={2021}
}

@inproceedings{stevanoski2024querycheetah,
  title={QueryCheetah: Fast Automated Discovery of Attribute Inference Attacks Against Query-Based Systems},
  author={Stevanoski, Bozhidar and Cretu, Ana-Maria and de Montjoye, Yves-Alexandre},
  booktitle={Proceedings of the 2024 on ACM SIGSAC Conference on Computer and Communications Security},
  pages={3451--3465},
  year={2024}
}

@inproceedings{zhu2024unified,
  title={A unified membership inference method for visual self-supervised encoder via part-aware capability},
  author={Zhu, Jie and Zha, Jirong and Li, Ding and Wang, Leye},
  booktitle={Proceedings of the 2024 on ACM SIGSAC Conference on Computer and Communications Security},
  pages={1241--1255},
  year={2024}
}

@inproceedings{liu2021encodermi,
  title={Encodermi: Membership inference against pre-trained encoders in contrastive learning},
  author={Liu, Hongbin and Jia, Jinyuan and Qu, Wenjie and Gong, Neil Zhenqiang},
  booktitle={Proceedings of the 2021 ACM SIGSAC Conference on Computer and Communications Security},
  pages={2081--2095},
  year={2021}
}

@article{yan2023mtl,
  title={MTL-leak: Privacy risk assessment in multi-task learning},
  author={Yan, Hongyang and Yan, Anli and Hu, Li and Liang, Jiaming and Hu, Haibo},
  journal={IEEE Transactions on Dependable and Secure Computing},
  volume={21},
  number={1},
  pages={204--215},
  year={2023},
  publisher={IEEE}
}

@inproceedings{zhifei2017cvpr,
  title={Age progression/regression by conditional adversarial autoencoder},
  author={Zhang, Zhifei and Song, Yang and Qi, Hairong},
  booktitle={Proceedings of the IEEE Conference on Computer Vision and Pattern Recognition},
  pages={5810--5818},
  year={2017}
}

@inproceedings{he2016deep,
  title={Deep residual learning for image recognition},
  author={He, Kaiming and Zhang, Xiangyu and Ren, Shaoqing and Sun, Jian},
  booktitle={Proceedings of the IEEE Conference on Computer Vision and Pattern Recognition},
  pages={770--778},
  year={2016}
}

@inproceedings{liu2015faceattributes,
  title={Deep learning face attributes in the wild},
  author={Liu, Ziwei and Luo, Ping and Wang, Xiaogang and Tang, Xiaoou},
  booktitle={Proceedings of the IEEE International Conference on Computer Vision},
  pages={3730--3738},
  year={2015}
}

@inproceedings{karkkainenfairface,
  title={Fairface: Face attribute dataset for balanced race, gender, and age for bias measurement and mitigation},
  author={Karkkainen, Kimmo and Joo, Jungseock},
  booktitle={Proceedings of the IEEE/CVF Winter Conference on Applications of Computer Vision},
  pages={1548--1558},
  year={2021}
}

@inproceedings{dosovitskiy2021an,
  author       = {Alexey Dosovitskiy and
                  Lucas Beyer and
                  Alexander Kolesnikov and
                  Dirk Weissenborn and
                  Xiaohua Zhai and
                  Thomas Unterthiner and
                  Mostafa Dehghani and
                  Matthias Minderer and
                  Georg Heigold and
                  Sylvain Gelly and
                  Jakob Uszkoreit and
                  Neil Houlsby},
  title        = {An Image is Worth 16x16 Words: Transformers for Image Recognition
                  at Scale},
  booktitle    = {9th International Conference on Learning Representations},
  year         = {2021}
}

@inproceedings{hu2021loralowrankadaptationlarge,
  author       = {Edward J. Hu and
                  Yelong Shen and
                  Phillip Wallis and
                  Zeyuan Allen{-}Zhu and
                  Yuanzhi Li and
                  Shean Wang and
                  Lu Wang and
                  Weizhu Chen},
  title        = {LoRA: Low-Rank Adaptation of Large Language Models},
  booktitle    = {The Tenth International Conference on Learning Representations},
  year         = {2022},
}

@article{tschandl2018ham10000,
  title={The HAM10000 dataset, a large collection of multi-source dermatoscopic images of common pigmented skin lesions},
  author={Tschandl, Philipp and Rosendahl, Cliff and Kittler, Harald},
  journal={Scientific Data},
  volume={5},
  number={1},
  pages={1--9},
  year={2018},
  publisher={Nature Publishing Group}
}

@inproceedings{schler2006effects,
    title={Effects of age and gender on blogging.},
    author={Schler, Jonathan and Koppel, Moshe and Argamon, Shlomo and Pennebaker, James W},
    booktitle={AAAI Spring Symposium: Computational Approaches to Analyzing Weblogs},
    volume={6},
    pages={199--205},
    year={2006}
}

@article{qwen3technicalreport,
  title={Qwen3 technical report},
  author={Yang, An and Li, Anfeng and Yang, Baosong and Zhang, Beichen and Hui, Binyuan and Zheng, Bo and Yu, Bowen and Gao, Chang and Huang, Chengen and Lv, Chenxu and others},
  journal={arXiv preprint arXiv:2505.09388},
  year={2025}
}

@inproceedings{standley2020taskslearnedmultitasklearning,
  title={Which tasks should be learned together in multi-task learning?},
  author={Standley, Trevor and Zamir, Amir and Chen, Dawn and Guibas, Leonidas and Malik, Jitendra and Savarese, Silvio},
  booktitle={International Conference on Machine Learning},
  pages={9120--9132},
  year={2020},
  organization={PMLR}
}

@article{yu2020gradientsurgerymultitasklearning,
  title={Gradient surgery for multi-task learning},
  author={Yu, Tianhe and Kumar, Saurabh and Gupta, Abhishek and Levine, Sergey and Hausman, Karol and Finn, Chelsea},
  journal={Advances in Neural Information Processing Systems},
  volume={33},
  pages={5824--5836},
  year={2020}
}

@inproceedings{dwork2006calibrating,
  title={Calibrating noise to sensitivity in private data analysis},
  author={Dwork, Cynthia and McSherry, Frank and Nissim, Kobbi and Smith, Adam},
  booktitle={Theory of Cryptography Conference},
  pages={265--284},
  year={2006},
  organization={Springer}
}

@inproceedings{abadi2016deep,
  title={Deep learning with differential privacy},
  author={Abadi, Martin and Chu, Andy and Goodfellow, Ian and McMahan, H Brendan and Mironov, Ilya and Talwar, Kunal and Zhang, Li},
  booktitle={Proceedings of the 2016 ACM SIGSAC Conference on Computer and Communications Security},
  pages={308--318},
  year={2016}
}

@inproceedings{yousefpour2021opacus,
  title={Opacus: User-Friendly Differential Privacy Library in PyTorch},
  author={Yousefpour, Ashkan and Shilov, Igor and Sablayrolles, Alexandre and Testuggine, Davide and Prasad, Karthik and Malek, Mani and Nguyen, John and Ghosh, Sayan and Bharadwaj, Akash and Zhao, Jessica and others},
  booktitle={NeurIPS 2021 Workshop Privacy in Machine Learning},
  year={2021}
}

@inproceedings{duan2024do,
  title={Do Membership Inference Attacks Work on Large Language Models?},
  author={Duan, Michael and Suri, Anshuman and Mireshghallah, Niloofar and Min, Sewon and Shi, Weijia and Zettlemoyer, Luke and Tsvetkov, Yulia and Choi, Yejin and Evans, David and Hajishirzi, Hannaneh},
  booktitle={First Conference on Language Modeling},
  year={2024}
}

@inproceedings{hayesexploring,
  title={Exploring the limits of strong membership inference attacks on large language models},
  author={Hayes, Jamie and Shumailov, Ilia and Choquette-Choo, Christopher A and Jagielski, Matthew and Kaissis, Georgios and Nasr, Milad and Annamalai, Meenatchi Sundaram Muthu Selva and Mireshghallah, Niloofar and Shilov, Igor and Meeus, Matthieu and others},
  booktitle={The Thirty-ninth Annual Conference on Neural Information Processing Systems},
  year={2025}
}

@article{dwork2014algorithmic,
  title={The algorithmic foundations of differential privacy},
  author={Dwork, Cynthia and Roth, Aaron},
  journal={Foundations and Trends{\textregistered} in Theoretical Computer Science},
  volume={9},
  number={3--4},
  pages={211--407},
  year={2014},
  publisher={Now Publishers, Inc.}
}

@inproceedings{lin2014microsoft,
  title={Microsoft coco: Common objects in context},
  author={Lin, Tsung-Yi and Maire, Michael and Belongie, Serge and Hays, James and Perona, Pietro and Ramanan, Deva and Doll{\'a}r, Piotr and Zitnick, C Lawrence},
  booktitle={European Conference on Computer Vision},
  pages={740--755},
  year={2014},
  organization={Springer}
}

@inproceedings{nasr2021adversaryinstantiationlowerbounds,
  title={Adversary instantiation: Lower bounds for differentially private machine learning},
  author={Nasr, Milad and Songi, Shuang and Thakurta, Abhradeep and Papernot, Nicolas and Carlin, Nicholas},
  booktitle={2021 IEEE Symposium on Security and Privacy (S\&P)},
  pages={866--882},
  year={2021},
  organization={IEEE}
}

@inproceedings{choquette2021label,
  title={Label-only membership inference attacks},
  author={Choquette-Choo, Christopher A and Tramer, Florian and Carlini, Nicholas and Papernot, Nicolas},
  booktitle={International Conference on Machine Learning},
  pages={1964--1974},
  year={2021},
  organization={PMLR}
}

@inproceedings{ganta2008composition,
  title={Composition attacks and auxiliary information in data privacy},
  author={Ganta, Srivatsava Ranjit and Kasiviswanathan, Shiva Prasad and Smith, Adam},
  booktitle={Proceedings of the 14th ACM SIGKDD International Conference on Knowledge Discovery and Data Mining},
  pages={265--273},
  year={2008}
}

@inproceedings{dinur2003revealing,
  title={Revealing information while preserving privacy},
  author={Dinur, Irit and Nissim, Kobbi},
  booktitle={Proceedings of the Twenty-second ACM SIGMOD-SIGACT-SIGART Symposium on Principles of Database Systems},
  pages={202--210},
  year={2003}
}

@misc{roberta_squad2_deepset,
  howpublished = {\url{https://huggingface.co/deepset/roberta-base-squad2}}
}

@misc{t5_qa_qg_valhalla,
  howpublished = {\url{https://huggingface.co/valhalla/t5-base-qa-qg-hl}}
}

@inproceedings{rajpurkar2016squad,
  title={Squad: 100,000+ questions for machine comprehension of text},
  author={Rajpurkar, Pranav and Zhang, Jian and Lopyrev, Konstantin and Liang, Percy},
  booktitle={Proceedings of the 2016 Conference on Empirical Methods in Natural Language Processing},
  pages={2383--2392},
  year={2016}
}

@inproceedings{tong2026membership,
  title={Membership Inference Attacks on Tokenizers of Large Language Models},
  author={Tong, Meng and Du, Yuntao and Chen, Kejiang and Zhang, Weiming and Li, Ninghui},
  booktitle = {35th USENIX Security Symposium (USENIX Security 26)}, 
  year = {2026}, 
  publisher = {USENIX Association}
}

@inproceedings{du2025imitative,
  title={Imitative Membership Inference Attack},
  author={Du, Yuntao and Chen, Yuetian and Xiao, Hanshen and Ribeiro, Bruno and Li, Ninghui},
  booktitle = {35th USENIX Security Symposium (USENIX Security 26)}, 
  year = {2026}, 
  publisher = {USENIX Association}
}

@inproceedings{chen2026window,
  title={Window-based Membership Inference Attacks Against Fine-tuned Large Language Models},
  author={Chen, Yuetian and Du, Yuntao and Zhang, Kaiyuan and Kundu, Ashish and Fleming, Charles and Ribeiro, Bruno and Li, Ninghui},
  booktitle = {35th USENIX Security Symposium (USENIX Security 26)}, 
  year = {2026}, 
  publisher = {USENIX Association}
}

@inproceedings{wang2026inference,
  title={Inference Attacks Against Graph Generative Diffusion Models},
  author={Wang, Xiuling and Huang, Xin and Luo, Guibo and Xu, Jianliang},
  booktitle = {35th USENIX Security Symposium (USENIX Security 26)}, 
  year = {2026}, 
  publisher = {USENIX Association}
}

@inproceedings{li2025compleak,
  title={CompLeak: Deep Learning Model Compression Exacerbates Privacy Leakage},
  author={Li, Na and Gao, Yansong and Hu, Hongsheng and Kuang, Boyu and Fu, Anmin},
  booktitle = {35th USENIX Security Symposium (USENIX Security 26)}, 
  year = {2026}, 
  publisher = {USENIX Association}
}

@inproceedings{wang2026vidleaks,
  title={VidLeaks: Membership Inference Attacks Against Text-to-Video Models},
  author={Wang, Li and Chen, Wenyu and Yu, Ning and Li, Zheng and Guo, Shanqing},
  booktitle = {35th USENIX Security Symposium (USENIX Security 26)}, 
  year = {2026}, 
  publisher = {USENIX Association}
}

%%%%%%%%%%%%%%%%%%%%%%%%%%%%%%%%%%%%%%%%%%%%%%%%%%%%%%%%%%%%%%%%%%%%%%%%%%%%%%%%
\appendix

\subsection{Single-Task MIA Performance and Victim Model Utility on CelebA}\label{app:single_task_celeba}

Table~\ref{tab:single_task_celeba} reports the per-task MIA performance and the underlying victim model utility for the seven CelebA tasks used in the divergence study (Section~\ref{sec:divergence}). All models are ResNet-18 independently trained for 60 epochs with identical hyperparameters. The training accuracy approaches 100\% for all tasks, indicating that each victim model fully fits its training set. However, the gap between training and validation accuracy varies notably across tasks: tasks with higher validation utility (e.g., Hat at 98.72\%, Male at 97.90\%) exhibit smaller train-val gaps and tend to expose weaker single-task MIA signals (AUC $\approx 0.51$--$0.53$), while tasks with lower validation utility (e.g., SHair at 72.71\%, Nose at 80.10\%) exhibit larger generalization gaps and yield stronger single-task leakage (AUC $\approx 0.64$). This is consistent with the well-known correlation between overfitting and membership leakage. Despite the heterogeneous per-task leakage strengths, \name consistently amplifies privacy risk through joint fusion, as analyzed in Section~\ref{sec:divergence}.

\begin{table}[h]
\centering
\caption{Single-task MIA performance and victim model task utility on CelebA (60 epochs). Train Acc and Val Acc denote the victim model's accuracy on its training and validation sets, respectively, for its own classification task.}
\label{tab:single_task_celeba}
\resizebox{0.95\linewidth}{!}{
\begin{tabular}{lcccc}
\toprule
\textbf{Task} & \textbf{MIA AUC} & \textbf{MIA Acc.} & \textbf{Model Train Acc.} & \textbf{Model Val Acc.} \\
\midrule
Male  & 0.5309 & 0.5166 & 100.00\% & 97.90\% \\
Mouth & 0.5489 & 0.5264 & 99.91\%  & 92.64\% \\
Nose  & 0.6422 & 0.5842 & 99.68\%  & 80.10\% \\
BHair & 0.5981 & 0.5570 & 99.72\%  & 88.13\% \\
Smile & 0.5581 & 0.5326 & 99.71\%  & 91.77\% \\
SHair & 0.6395 & 0.5834 & 99.66\%  & 72.71\% \\
Hat   & 0.5130 & 0.5088 & 100.00\% & 98.72\% \\
\bottomrule
\end{tabular}
}
\end{table}

\subsection{Prompt Template}\label{app:prompt_template}
% ======================================================

The Prompt templates used are shown in Table~\ref{tab:prompt_template}.

\begin{table}[h]
\centering
\caption{Prompt templates and example responses for LLM fine-tuning tasks. The model generates the response as free-form text following the prompt.}
\label{tab:prompt_template}
\resizebox{\linewidth}{!}{
\begin{tabular}{p{1.5cm}|p{5.0cm}|p{1.5cm}}
\toprule
\textbf{Task} & \textbf{Prompt Template} & \textbf{Response} \\
\midrule
Gender & Based on the following blog post, predict the author's gender (male or female).\newline\newline Blog: \textit{\{text\}}\newline\newline Gender: & male \\
\midrule
Horoscope & Based on the following blog post, predict the author's zodiac sign.\newline\newline Blog: \textit{\{text\}}\newline\newline Zodiac sign: & Gemini \\
\midrule
Age & Based on the following blog post, predict the author's age (as a number).\newline\newline Blog: \textit{\{text\}}\newline\newline Age: & 27 \\
\bottomrule
\end{tabular}
}
\end{table}

\subsection{Differential Privacy as Defense}\label{app:dpDefense}

\subsubsection{Experimental Setup}
We conduct DP defense experiments on UTKFace using a compact CNN architecture compatible with Opacus~\cite{yousefpour2021opacus}. Following the DP-SGD framework, we apply per-sample gradient clipping with maximum norm $C=1.0$ and add calibrated Gaussian noise during training. We evaluate two noise multiplier configurations: $\sigma=0.2$ (moderate noise) and $\sigma=0.5$ (strong noise), compared against the non-private baseline. All models are trained for 50 epochs with privacy parameter $\delta=10^{-5}$.

To simulate a realistic attack scenario, we train \textit{only the victim models} with DP-SGD while keeping shadow models non-private. This represents a conservative evaluation where the adversary cannot replicate the victim's privacy-preserving training procedure.

\subsubsection{Result and Analysis}

\mypara{Impact on Model Utility.}
Table~\ref{tab:dp_utility} presents the task performance under different DP configurations. As expected, adding noise during training degrades model utility across all tasks. With $\sigma=0.2$, the Race classification accuracy drops by 4.6\% (from 76.58\% to 71.98\%), while Age MAE increases by 1.25 years. The stronger $\sigma=0.5$ configuration incurs larger utility loss: Race accuracy decreases by 9.58\% and Age MAE increases by 2.73 years. Gender classification proves more robust, with accuracy remaining above 86\% under strong DP.

\mypara{Defense Effectiveness Against MIA.}
Table~\ref{tab:dp_attack} shows the MIA performance under DP protection. The results demonstrate that DP-SGD provides substantial defense against both single-task and joint attacks.

Without privacy protection, \name achieves an AUC of 0.879, substantially threatening membership privacy. With moderate DP ($\sigma=0.2$), the joint attack AUC drops dramatically to 0.550, only marginally above random guessing. Strong DP ($\sigma=0.5$) further reduces the AUC to 0.519, effectively neutralizing the attack. Single-task attacks are similarly mitigated, with average AUC decreasing from 0.674 to approximately 0.51 to 0.53.

% To quantify the defense effectiveness, we define the \textit{Protection Rate} as the fraction of excess attack AUC (above the random baseline of 0.5) that is eliminated:
% \begin{equation}
% \text{Protection Rate} = \frac{\text{AUC}_{\text{no-DP}} - \text{AUC}_{\text{DP}}}{\text{AUC}_{\text{no-DP}} - 0.5} \times 100\%.
% \end{equation}

\mypara{Privacy-Utility Trade-off Analysis.} Results in Table~\ref{tab:dp_tradeoff} reveal a favorable privacy-utility trade-off. Moderate DP ($\sigma=0.2$) reduces \name attack AUC by 0.329 (from 0.879 to 0.550), achieving 86.7\% protection rate while incurring only 4.60\% Race classification accuracy loss. Strong DP ($\sigma=0.5$) provides even stronger defense with 95.1\% protection rate, though at the cost of 9.58\% accuracy degradation. For privacy-sensitive applications such as medical imaging, sacrificing moderate utility to achieve near-complete attack mitigation represents an acceptable trade-off.

\mypara{Theoretical vs. Empirical Privacy.}
A noteworthy observation is the gap between theoretical privacy guarantees and empirical attack effectiveness. The $\sigma=0.2$ configuration yields $\varepsilon \approx 635$, which theoretically provides weak privacy guarantees according to the standard DP framework. However, the empirical \name attack AUC of 0.550 indicates strong practical protection, nearly reducing the attack to random guessing. Conversely, $\sigma=0.5$ achieves a much tighter $\varepsilon \approx 22.7$, yet the empirical improvement is marginal (AUC: 0.550 $\rightarrow$ 0.519). This discrepancy aligns with recent findings that DP-SGD's empirical privacy often exceeds its theoretical bounds~\cite{nasr2021adversaryinstantiationlowerbounds}, suggesting that $\varepsilon$ alone may not fully capture real-world privacy risks against MIA.

\mypara{Residual Privacy Amplification Under DP.}
Despite the substantial protection offered by DP, we observe that \name \textit{consistently outperforms} single-task MIAs even under strong privacy protection. As shown in Table~\ref{tab:dp_attack}, with $\sigma=0.5$, the joint attack achieves AUC of 0.519 compared to the average single-task AUC of 0.511. While both values are close to random guessing, the persistent gap indicates that the privacy amplification effect from ODMM exposure is not entirely eliminated by DP. This residual risk, though small, underscores that multi-model deployments require more conservative privacy budgets than single-model scenarios.

% \mypara{Practical Deployment Considerations.}
% While DP effectively mitigates \name, several considerations remain for real-world ODMM deployments:

% \noindent$\bullet$\textbf{Uniform Protection Requirement}: All ODMM models sharing the same training data must be protected with DP. A single unprotected model can leak membership signals that attackers exploit, undermining the privacy guarantees of other protected models.

% \noindent$\bullet$\textbf{Privacy Budget Allocation}: When deploying $K$ ODMM models, the total privacy leakage compounds across models. Organizations should account for this cumulative exposure when setting per-model privacy budgets.

% \noindent$\bullet$\textbf{Utility-Sensitive Domains}: In applications where utility degradation is unacceptable (e.g., safety critical medical diagnosis), alternative defenses such as output perturbation~\cite{dwork2014algorithmic} or access control mechanisms may complement or substitute DP-SGD.

\subsection{Hard-Label Experiment Details}
\label{app:hard_label}

\mypara{Perturbation Setup.}
For the hard-label experiments described in Section~\ref{sec:hard_label}, we utilize a noise-based estimation method. For a given input image $x$, we generate $N=30$ perturbed samples $x_i = x + \delta_i$, where $\delta_i \sim \mathcal{N}(0, \sigma^2)$ with $\sigma=0.05$. We query each task-specific model with these perturbed samples to obtain a set of labels.

\mypara{Feature Extraction.}
Since the models do not return probabilities, we construct features based on label stability. For classification tasks (Gender, Race), we directly compare the predicted class indices. For the regression task (Age), we discretize the output into 12 bins (0-9, 10-19, etc.) to treat it as a classification problem for consistency calculation. The extracted features for each task are:
\begin{itemize}[leftmargin=*]
    \item \textbf{Original Correctness}: Boolean indicating if the prediction on the clean image matches the ground truth.
    \item \textbf{Consistency Ratio}: The fraction of perturbed samples $x_i$ that yield the same prediction as the clean image $x$.
    \item \textbf{Robustness Accuracy}: The fraction of perturbed samples $x_i$ that are classified correctly.
    \item \textbf{Label Entropy}: The entropy of the distribution of predicted labels across the $N$ perturbations.
\end{itemize}

\mypara{Joint Attack Construction.}
The \name attack model aggregates the extracted 4-dimensional feature vectors from all three tasks, resulting in a 12-dimensional joint feature vector. A Random Forest classifier (100 estimators) is trained on this joint representation to infer membership.

\subsection{Proof of Theorem~\ref{thm:monotone}}
\label{app:monotone}
The proof follows from a data processing argument in reverse. Let $\sigma(O_{1:k})$ denote the $\sigma$-algebra generated by the first $k$ model outputs and $\sigma(O_{1:k+1})$ the $\sigma$-algebra generated by $k+1$ outputs.
Since $O_{1:k}$ is a deterministic function of $O_{1:k+1}$ (projection onto the first $k$ coordinates), we have $\sigma(O_{1:k}) \subseteq \sigma(O_{1:k+1})$.
Moving from $k$ to $k+1$ models enlarges the observable $\sigma$-algebra.
 
The Bayes-optimal membership inference attack corresponds to the likelihood-ratio test on the observed data. For any test $\mathcal{A}_k^*$ that is measurable with respect to $\sigma(O_{1:k})$, we can construct the same test as a function of $O_{1:k+1}$ by simply ignoring the $(k+1)$-th output. Therefore, the supremum over all tests measurable with respect to $\sigma(O_{1:k+1})$ is taken over a strictly larger class of functions, yielding:
\[
    \scalebox{0.85}{$
    \mathrm{Adv}(\mathcal{A}_{k+1}^*) = \sup_{\mathcal{A} \in \sigma(O_{1:k+1})} \mathrm{Adv}(\mathcal{A}) \;\geq\; \sup_{\mathcal{A} \in \sigma(O_{1:k})} \mathrm{Adv}(\mathcal{A}) = \mathrm{Adv}(\mathcal{A}_{k}^*)
    $}
\]

\subsection{Proof of Theorem~\ref{thm:llr-decomp}}
\label{app:llr-decomp}
Given the leakage $S_1(z), \ldots, S_k(z)$ from $k$ ODMM models, 
the joint log-likelihood ratio is defined as:
\[
    \Lambda_k(z) = \log 
    \frac{p(S_1, \ldots, S_k \mid M=1)}
    {p(S_1, \ldots, S_k \mid M=0)},
\]
where $M \in \{0,1\}$ denotes the membership 
status and $p(\cdot \mid M)$ denotes the joint 
conditional distribution of all leakage 
given $M$. Under the conditional independence 
assumption (Assumption~\ref{asmp:cond-indep}), 
the joint conditional distribution factorizes 
into a product of per-model terms:
\[
    p(S_1, \ldots, S_k \mid M) = 
    \prod_{i=1}^{k} p_i(S_i \mid M).
\]
Substituting this factorization into the 
log-likelihood expression and applying the 
logarithm of a product yields:
\[
    \Lambda_k(z) = \log \prod_{i=1}^{k} 
    \frac{p_i(S_i(z) \mid M=1)}
    {p_i(S_i(z) \mid M=0)} 
    = \sum_{i=1}^{k} \log 
    \frac{p_i(S_i(z) \mid z \in \mathcal{D})}
    {p_i(S_i(z) \mid z \notin \mathcal{D})},
\]
which is the additive decomposition stated in Eq.~\eqref{eq:llr}. 
%Each term in the summation represents the membership evidence contributed by the $i$-th model independently.

\subsection{Proof of Theorem~\ref{thm:keff}}
\label{app:keff}
For each model $f_i$, let $w_i$ denote the standalone leakage strength, defined as the expected separation of its leakage between members and non-members:
\[
    w_i \;=\; \mathbb{E}[S_i \mid M=1] 
    \;-\; \mathbb{E}[S_i \mid M=0].
\]
The meta-classifier of \name aggregates the leakage via concatenation, producing a joint statistic whose overall leakage is:
\[
    \mathbb{E}\Bigl[\sum_{i=1}^{k} S_i 
    \mid M=1\Bigr] - 
    \mathbb{E}\Bigl[\sum_{i=1}^{k} S_i 
    \mid M=0\Bigr] \;=\; 
    \sum_{i=1}^{k} w_i.
\]
Since the per-model leakage may be correlated with pairwise correlation $\rho_{ij} = \mathrm{Corr}(S_i, S_j)$, the variance of the aggregate statistic is:
\[
    \mathrm{Var}\Bigl(\sum_{i=1}^{k} 
    S_i\Bigr) \;=\; 
    \sum_{i=1}^{k}\sum_{j=1}^{k} 
    \sigma_i\, \sigma_j\, \rho_{ij},
\]
where $\sigma_i^2 = \mathrm{Var}(S_i)$. Setting 
$w_i = \sigma_i$ without loss of generality (by 
rescaling), the signal-to-noise ratio (SNR) of the 
aggregate statistic becomes:
\[
    \mathrm{SNR}_k \;=\; 
    \frac{\bigl(\sum_{i=1}^{k} 
    w_i\bigr)^2}{\sum_{i=1}^{k}\sum_{j=1}^{k} 
    w_i\, w_j\, \rho_{ij}} \;=\; 
    \alpha.
\]
Since the discriminative power of the aggregate statistic is determined by $\mathrm{SNR}_k$, the overall amplification is directly proportional to $\alpha$. 
%The boundary cases follow immediately: when $\rho_{ij} = 0$ for $i \neq j$, $\alpha = (\sum_i w_i)^2 / \sum_i w_i^2 \leq k$ with equality when all $w_i$ are equal; when $\rho_{ij} = 1$ for all pairs, $\alpha = 1$.

\end{document}